\documentclass[A4paper,11pt]{article}

\usepackage{latexsym}
\usepackage{epsfig}
\usepackage{amssymb}
\usepackage{amsmath}
\usepackage{psfrag}
\usepackage{paralist}

\newcommand{\ignore}[1]{}
\addtolength{\textwidth}{4cm}
\addtolength{\textheight}{4cm}
\addtolength{\oddsidemargin}{-2cm}
\addtolength{\topmargin}{-2.5cm}

\newtheorem{theorem}{Theorem}[section]

\newtheorem{lemma}[theorem]{Lemma}

\newtheorem{claim}[theorem]{Claim}

\newtheorem{definition}[theorem]{Definition}

\newtheorem{corollary}[theorem]{Corollary}

\newcommand{\cA}{{\cal A}}
\newcommand{\cB}{{\cal B}}
\newcommand{\cC}{{\cal C}}
\newcommand{\cD}{{\cal D}}

\newcommand{\cH}{{\cal H}}
\newcommand{\cU}{{\cal U}}
\newcommand{\cP}{{\cal P}}
\newcommand{\cQ}{{\cal Q}}
\newcommand{\cS}{{\cal S}}
\newcommand{\bP}{{\bf P}}
\newcommand{\bU}{{\bf U}}
\newcommand{\bD}{{\bf D}}
\newcommand{\bQ}{{\bf Q}}
\newcommand{\be}{{\bf e}}
\newcommand{\RR}{{\mathbb R}}

\newcommand{\df}{\textrm{def}}

\newcommand{\poly}{\textrm{poly}}

\newcommand{\qed}{\hfill $\Box$ \\}
\newcommand{\eps}{\varepsilon}

\begin{document}
\begin{titlepage}
\title{Is submodularity testable?}
\date{}

\author{
C. Seshadhri \\
IBM Almaden Research Center\\
650 Harry Road, \\
San Jose, CA 95120\\
{\tt csesha@us.ibm.com} \and
Jan Vondr\'ak \\
IBM Almaden Research Center\\
650 Harry Road, \\
San Jose, CA 95120\\
{\tt jvondrak@us.ibm.com}
}
\maketitle

\thispagestyle{empty}
\begin{abstract}

We initiate the study of property testing of submodularity on the boolean hypercube.
Submodular functions come up in a variety of applications in combinatorial
optimization. For a vast range of algorithms, the existence of an oracle
to a submodular function is assumed. But how does one check if this oracle
indeed represents a submodular function? 

Consider a function $f:\{0,1\}^n \rightarrow \RR$. The \emph{distance to
submodularity} is the minimum fraction of values of $f$ that need
to be modified to make $f$ submodular. If this distance is more
than $\epsilon > 0$, then we say that $f$ is $\epsilon$-far
from being submodular. The aim is to have
an efficient procedure that, given input $f$ that is $\epsilon$-far
from being submodular, certifies that $f$ is not submodular.
We analyze a very natural tester for this problem, and prove
that it runs in subexponential time. This gives the first non-trivial tester for submodularity. 
On the other hand, we prove an interesting lower bound (that is,
unfortunately, quite far from the upper bound) suggesting that
this tester cannot be very efficient in terms of $\epsilon$.
This involves non-trivial examples of functions which are far from submodular
and yet do not exhibit too many local violations.

We also provide some constructions indicating the difficulty in designing 
a tester for submodularity. We construct
a partial function defined on exponentially many points
that cannot be extended to a submodular function, but any strict
subset of these values can be extended to a submodular function. 
\end{abstract}
\end{titlepage}

\section{Introduction}

Submodular functions have been studied in great depth in combinatorial optimization
\cite{Edmonds70,NWF78,NWF78II,Lovasz83,Frank97,Schrijver00,FFI00}.
A set function $2^U \rightarrow \RR$ is submodular
if  $\forall S, T \subseteq U$, $ f(S \cup T) + f(S \cap T) \leq f(S) + f(T) $.
An alternative and equivalent view of submodularity is the monotonicity
of \emph{marginal values}. For all $S \subset T$ and elements $i \notin T$,
a submodular function satisfies $f(S \cup \{i\}) - f(S) \geq f(T \cup \{i\}) - f(T)$.
We will think of $f$ as a function in $\{0,1\}^n \rightarrow \mathbb{R}$.

These functions are often used in many algorithmic applications and very naturally
show up when modeling utilities. 
It is quite common to assume that algorithms have oracle access to some submodular
function: given a set $S$, we have access to $f(S)$. 
Observe that, in general, the description of the submodular function $f$ 
has size that is exponential in $n$, whereas
most algorithms that use $f$ run in polynomial time. This means that these algorithms
look at a very tiny fraction of $f$, yet their behavior depends on a very global
property of $f$. 
This leads to the very natural question: what if the function $f$
provided to the algorithm was \emph{not} submodular? Could the algorithm
detect this, or would it get fooled? Obviously, if $f$ is constructed by taking
a submodular function and making very few changes to the values, then
there is no need to think that algorithms should be affected. On the other
hand, if $f$ is ``significantly different" from a submodular function,
the behavior of these algorithms could very different. 

Let us formally explain the notion of being different from a submodular function.
Since polynomial time algorithms are \emph{sublinear} with respect to the size of $f$,
it is natural to use some property testing terminology.
A function $f$ is \emph{$\epsilon$-far from being submodular} if $f$ needs to be changed
at an $\epsilon$-fraction of values to make it submodular. 
In polynomial time, can we detect that such a function is not submodular?
If this is not possible, then this raises some very fundamental questions
about submodularity. If the plethora of algorithms used cannot tell whether
their input $f$ is submodular or not, then in what sense are they actually
using the submodularity of $f$? This would suggest that the algorithms
exploit a property more general than submodularity. It would
be strange if we expect input functions $f$ to have
a property (submodularity), but we cannot even check if these functions
deviate significantly from submodularity.

The main question here is whether submodularity is testable, i.e, is there
a polynomial time procedure that distinguishes submodular functions
from those that are $\epsilon$-far? (This question was first posed
as an open problem in~\cite{PRR03},
in the context of submodularity testing over grids.
Their results focused on testing over large low-dimensional grids rather
than the high-dimensional hypercube $\{0,1\}^n$.)
More concretely, what are the kind of structural properties
of submodularity that we need to address? Property testing algorithms, especially those
for functions on the hypercube, usually check for some \emph{local property}.
These algorithms check if the desired property holds in a small
local neighborhood, for some randomly chosen neighborhoods. If no
deviation is detected, then property testers conclude that the 
input function is close to the property. Do similar statements hold
for submodularity? We show non-trivial upper and lower bounds
connecting local submodularity violations to the distance.

Property testing proofs often show that a function is close
to a property by explicitly modifying the function to make
it have the property. Usually, there is some procedural
method to perform this conversion. This raises a very interesting
question about \emph{partial} submodular functions: suppose one is given
a \emph{partial} function over the hypercube. This means that 
some set of values is defined, but the remaining are left undefined.
Under what circumstances can this be completed into a submodular function?
If this cannot be completed, can we provide a small certificate
of this? For a vast majority of natural testable properties (over
functions on the hypercube, e.g. monotonicity)
such small certificates do exist. Unfortunately, this is no longer true
for submodularity. We present an example showing that a minimal certificate
of non-extendability can be exponentially large.

\subsection{Our results} \label{sec-res}


Before we state our main theorems, we first set some notation.

\begin{definition} \label{def:der}
Denote by $\be_i \in \{0,1\}^n$ the canonical basis vector which has $1$
in the $i$-th coordinate and $0$ everywhere else.

For a function $f:\{0,1\}^n \rightarrow \RR$, $i \in [n]$ and $x \in \{0,1\}^n$ such that
$x_i=0$, we define the marginal value of $i$ (or discrete derivative) at $x$ as
$ \partial_i f(x) = f(x + \be_i) - f(x).$

A function $f$ is submodular, if for any $i \in [n]$
and $x,y \in \{0,1\}^n$ such that $x_i = y_i = 0$ and $x \leq y$ coordinate-wise, 
$\partial_i f(x) \geq \partial_i f(y)$.

The distance $d(f,g)$ between two functions $f$ and $g$ is the fraction of points $x$
where $f(x) \neq g(x)$. Let $\cS$ be the set of all submodular functions.
The \emph{distance of $f$ to submodularity} is $\min_{g \in \cS} d(f,g)$. 
We say \emph{$f$ is $\epsilon$-far from being submodular} if the distance
of $f$ to submodularity is more than $\epsilon$.
\end{definition}

\begin{definition}
A \emph{property tester for submodularity} is an algorithm with the following
properties.
%
\begin{compactitem}
\item If $f$ is submodular, then the algorithm answers YES with probability $1$\footnote{We are 
actually dealing with one-sided testers here. If we allowed a probability of error for this case,
that would be a two-sided tester.}.
\item If $f$ is $\epsilon$-far from submodular, then the algorithm answers NO
with probability at least $2/3$.
\item The number of queries made to $f$ is \emph{sublinear} in the domain size,
which is $2^n$. (Ideally, the number of queries is polynomial in $n$ and $1/\epsilon$.)
\end{compactitem}
\end{definition}

\paragraph{Submodularity vs.~monotonicity.}
Our first observation is that testing submodularity is at least as hard as testing
monotonicity. More formally, the problem of testing monotonicity for a function
$f:\{0,1\}^n \rightarrow \RR$ can be reduced to the problem of testing
submodularity for a function $f':\{0,1\}^{n+1} \rightarrow \RR$. 
We present this reduction in Section~\ref{sec:reduction}.

A consequence of this is that known lower bounds for monotonicity testing apply also
to submodularity testing. For example, it is known that a non-adaptive monotonicity tester
requires at least $\Omega(\sqrt{n})$ queries \cite{FLNRR02}. We remark that the best
known monotonicity tester on $\{0,1\}^n$ takes $O(n^2/\eps)$ queries \cite{DGLRRS99} and
is non-adaptive.

Submodularity can be naturally viewed as ``second-degree monotonicity", i.e. monotonicity
of the discrete partial derivatives $\partial_i f$.
So a very natural test for submodularity is to simply run a monotonicity tester
on the functions $\partial_i f$. 
In one direction, it is clear that
for a submodular function, such a tester would always accept. However, it is not clear
whether this tester would recognize functions that are far from being submodular
and label them as such. 

Monotonicity testers search randomly for pairs $x, x+\be_i$
such that $f(x) > f(x+\be_i)$. Such a pair of points can be naturally called a ``violated pair".
It is known that if $f$ is $\epsilon$-far from being monotone, then the fraction of violated pairs
is at least $\epsilon/n^{O(1)}$ \cite{GGLRS00,DGLRRS99}.
If we want to test submodularity by reducing to a monotonicity tester in each direction, 
this means that we are looking for violations of the following type: $x \in \{0,1\}^n$
such that $x_i=x_j=0$ and $f(x+\be_i) - f(x) < f(x+\be_i+\be_j) - f(x+\be_j)$.
We call such violations {\em violated squares}.

\begin{definition} \label{def:sq}
We call $\{x,x+\be_i,x+\be_j,x+\be_i+\be_j\}$ a \emph{square}. This
is called a \emph{violated square}, if $ f(x) + f(x+\be_i+\be_j) > f(x+\be_i) + f(x+\be_j)$.
The {\em density} of violated squares is the number
of violated squares divided by ${n \choose 2} 2^{n-2}$.
\end{definition}

Our main combinatorial result consists of two bounds on the relationship of
the distance from submodularity and the density of violated squares.

\begin{theorem} \label{thm-tester}
Let $n$ be a sufficiently large integer.
\begin{itemize}
\item 
Let $\epsilon \in (0,e^{-5})$.
For any function $f:\{0,1\}^n \rightarrow \RR$ that is $\epsilon$-far from being submodular,
the density of violated squares is at least $\epsilon^{O(\sqrt{n} \log n)}$.
\item 
For any $\epsilon \geq 2^{-n/10}$,
there is a function $f:\{0,1\}^n \rightarrow \RR$
which is $\epsilon$-far from being submodular and its density of violated squares
is less than $\epsilon^{4.8}$.
\end{itemize}
\end{theorem}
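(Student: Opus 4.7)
The plan splits naturally by the two directions of the theorem. For the upper bound, I would prove the contrapositive: if $f$ has density of violated squares at most $\delta$, then $f$ is at distance at most $\delta^{1/O(\sqrt{n}\log n)}$ from some submodular function. The starting observation is that submodularity of $f$ is equivalent to monotonicity of each discrete derivative $\partial_i f$ in the remaining $n-1$ coordinates, so a violated square in coordinates $(i,j)$ is exactly a violating edge for $\partial_i f$ in direction $j$. I would then invoke the monotonicity distance-to-violations bounds of \cite{GGLRS00,DGLRRS99} to repair each $\partial_i f$ to a monotone function, and iterate the procedure in rounds. The main difficulty is interaction: modifying $f$ to fix $\partial_i f$ generically creates new violations in every other $\partial_j f$. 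To control this I would partition the $n$ coordinates into $\sqrt{n}$ groups of size $\sqrt{n}$, process one group at a time, and iterate the whole sweep $O(\log n)$ times, with each sweep contracting a carefully chosen global "non-submodularity potential" by a constant factor. The total number of passes is then $\sqrt{n}\log n$, matching the exponent in the theorem; the main obstacle, and the real content of this half, is designing a potential that is additive over groups yet is not increased by a local monotonicity repair.

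For the lower bound, the plan is to construct an explicit function whose non-submodularity is "thin in square-space but unavoidable in point-space". I would start from a submodular base $h$, for instance a concave function of the Hamming weight $|x|$, and add a carefully tuned perturbation $\varphi$ supported in a thin rectangular slab of the cube, tilted so that violated squares can only arise in a few coordinate pairs $(i,j)$ and only inside the slab. By choosing the slab width correctly, the number of violated squares can be made at most $\epsilon^{4.8} \binom{n}{2} 2^{n-2}$. To argue that $f = h + \varphi$ is truly $\epsilon$-far from $\cS$, I would show that any submodular $g$ that agrees with $f$ off a set of size $<\epsilon 2^n$ yields, via restriction and marginal averaging, a submodular extension of the slab's marginal structure to an enlarged region; by design this extension cannot exist, so $g$ must disagree with $f$ on at least $\epsilon 2^n$ points. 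The delicate part is calibrating $\varphi$ so that the violated-square count is of order $\epsilon^{4.8}$ while the forced distance stays $\epsilon$: this requires a quantitative choice of slab width, height, and tilt, and tuning the exponent $4.8$ is unlikely to be canonical.

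Between the two parts I expect the upper bound to be harder, since the bookkeeping of induced violations across directions is precisely what should force the subexponential rather than polynomial dependence; the lower bound is more combinatorial and should reduce to a single clean construction plus a careful counting argument.
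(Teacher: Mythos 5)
There are genuine gaps in both halves of your plan. For the first part, your strategy---repairing each $\partial_i f$ with a monotonicity corrector and sweeping over $\sqrt{n}$ groups of coordinates $O(\log n)$ times---founders on exactly the obstacle you name and then defer: a local fix of $\partial_i f$ generically creates new violations in every other direction, and no potential that is ``not increased by a local monotonicity repair'' is exhibited (nor is one known; this cross-direction interaction is precisely why submodularity testing is hard, and why the known monotonicity machinery of \cite{GGLRS00,DGLRRS99} does not transfer). The paper avoids iteration entirely: Lemma~\ref{lem:easy-fix} shows that a single violated square with bottom point $x$ can be repaired by subtracting its deficit from \emph{all} points of the down-set $\{y:y\le x\}$ (or the up-set above $x+\be_i+\be_j$), and that this one-shot repair creates \emph{no} new violated square in any direction. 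The cost is then just the union of these subcubes, and the exponent $\epsilon^{O(\sqrt{n}\log n)}$ comes not from a number of passes but from the counting trade-off in Lemma~\ref{lem:easy-fix-dist}: levels more than $a\sqrt{n}$ below the middle contribute at most $2^n e^{-a^2}$ (Chernoff), levels near the middle contribute at most $|B|\,n^{a\sqrt{n}}$ (union bound), optimized at $a=\frac12\ln(1/\epsilon)$. Your plan, as stated, cannot be completed without supplying the missing invariant, and even if it could, it is not clear it would yield the claimed quantitative dependence.

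For the second part, the slab-perturbation sketch is missing the actual mechanism that forces the distance to stay large while the violated squares stay few. The paper's construction takes the Hamming-distance function to a lattice $\cal L$ (submodular by Lemma~\ref{lem:lattice}), glues it to two linear functions via two extra coordinates so that the violated squares are in bijection with the lattice points (Lemma~\ref{lem:L-vsq}), and---this is the real content---proves propagation lemmas (Lemmas~\ref{lem:increase-value}, \ref{lem:decrease-value}, \ref{lem:downmon-increase}) showing that any submodular correction of a \emph{linear} function at a point in the middle of a cube must differ from it on exponentially many points; applied to disjoint regions $Q_x$ of size $3^{n/4}$ around middle-layer lattice points, this gives distance $\Omega(2^{-n/2}3^{n/4})\approx 2^{-0.104n}$ against only $2^{-n/2}$ density of violated squares, whence the exponent $0.5/0.104\approx 4.8$ (and the range $\epsilon\ge 2^{-n/10}$ is reached by padding with dummy coordinates). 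Your appeal to ``restriction and marginal averaging'' yielding an impossible ``submodular extension of the slab's marginal structure'' is not an argument: nothing in the sketch explains why a submodular $g$ agreeing with $f$ off a set of size $<\epsilon 2^n$ is impossible, and a concave-of-weight base plus a tilted slab perturbation will typically have its distance bounded by the slab's support while its violated squares scale with the slab's boundary, with no reason for the two to separate by a power close to $5$. Without an analogue of the linearity-propagation lemmas, the calibration you describe cannot be carried out. (As a side remark, the difficulty assessment is inverted: the paper's first part is the easy one.)
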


The first part of the theorem is proven through relatively basic observations.
The second part is quite technical and requires a much deeper understanding of submodularity.

Theorem~\ref{thm-tester} provides evidence that testing submodularity
is very different from testing monotonicity.
An intuition one might get from monotonicity testing is that if a natural extension
to submodularity exists, its dependence on $\epsilon$ should be relatively mild,
perhaps linear or quadratic. We show that this is not the case, in particular
if the dependence is a polynomial in $1/\epsilon$,
the degree of the polynomial would have to be at least $5$. 
This holds even in the range of exponentially small $\epsilon = 2^{-\Theta(n)}$,
which means that $poly(n) / \epsilon^{4.8}$ queries for any polynomial in $n$ are
not enough. This might be interpreted
as counterintuitive to the notion that the dependence is polynomial at all. However,
we cannot currently push this construction any further.

The first part of Theorem~\ref{thm-tester} implies immediately that a submodularity tester that checks $q =
 1/\epsilon^{O(\sqrt{n} \log n)}$ random squares succeeds 
with high probability\footnote{We use ``high probability" to refer to probability $> 2/3$.}.
Note that this is a non-adaptive tester, because the queries do not depend
on the function values.
To our knowledge, this is the first testing result asymptotically better than
the trivial tester checking $2^{\Theta(n)}$ squares.

\begin{corollary} \label{cor-tester}
There is a subexponential time non-adaptive tester for submodularity.
This procedure samples $1/\epsilon^{O(\sqrt{n} \log n)}$
sqaures at random and checks if any are violated. If the input $f$ is $\epsilon$-far from being
submodular, this procedure rejects with high probability.
\end{corollary}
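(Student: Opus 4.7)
The plan is to derive Corollary~\ref{cor-tester} as a direct sampling consequence of the first bullet of Theorem~\ref{thm-tester}. Let $\delta(\epsilon,n) = \epsilon^{O(\sqrt{n}\log n)}$ denote the lower bound on violated-square density guaranteed by the theorem whenever $f$ is $\epsilon$-far from submodular (assuming $\epsilon < e^{-5}$ and $n$ large; the other range can be absorbed into constants). The tester draws $q = \lceil c/\delta(\epsilon,n) \rceil$ squares independently and uniformly at random from the $\binom{n}{2} 2^{n-2}$ squares of $\{0,1\}^n$, queries $f$ at the four vertices of each, and rejects iff it sees a violated square.

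The correctness analysis is a standard two-case argument. First, if $f$ is submodular then by definition it has no violated square, so the tester accepts with probability $1$; this gives the required one-sided error. Second, if $f$ is $\epsilon$-far from submodular, the probability that a single uniformly random square is not violated is at most $1-\delta(\epsilon,n)$, so the probability that none of the $q$ independent samples is violated is at most $(1-\delta(\epsilon,n))^q \leq e^{-c}$. Choosing $c = \ln 3$ already drives the failure probability below $1/3$, meeting the $2/3$ rejection requirement. Non-adaptivity is immediate because the list of queries is determined by the random choices of the squares alone, without ever consulting $f$.

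For the query and time bounds, each sampled square costs $4$ oracle queries, so the total number of queries is $4q = 1/\epsilon^{O(\sqrt{n}\log n)}$. Since $\sqrt{n}\log n = o(n)$, this is subexponential in $n$ for every fixed $\epsilon > 0$, and more generally subexponential as long as $\log(1/\epsilon) = o(\sqrt{n}/\log n)$; in all cases it is strictly better than the trivial $2^{\Theta(n)}$ tester that inspects every square. The running time is dominated by the sampling and the $O(1)$ arithmetic per square, hence matches the query complexity up to polynomial factors.

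The only non-trivial ingredient is the density bound from Theorem~\ref{thm-tester}, which is invoked as a black box; the remainder is just a Chernoff/union-style coupon argument, so there is no real obstacle here beyond confirming that the implicit constants in the $O(\cdot)$ of $\delta(\epsilon,n)$ can be absorbed into the sample-size constant $c$.
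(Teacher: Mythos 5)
Your proposal is correct and is essentially the paper's own argument: the paper likewise invokes the density bound from the first part of Theorem~\ref{thm-tester} and combines it with the standard sampling calculation (packaged there as Lemma~\ref{lem:tester-eq}, whose proof is exactly your $(1-\delta)^q \leq e^{-q\delta}$ estimate) to conclude one-sided, non-adaptive, subexponential testing.
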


\paragraph{Extending partial functions.}
A \emph{partial function} $f$ is one that is defined on only some
subset of the hypercube. Such a function is \emph{extendable},
if the remaining values can be filled in to get a submodular function.
Although the question of extending partial functions is interesting
in itself, it also has some relevance to question of testing
submodularity. 

Any proof of a property tester must show that if a function $f$ passes
the tester (with high probability), then $f$ must be $\epsilon$-close
to submodularity. This is usually done by arguing that if $f$ has a 
sufficiently low density of local violations,
one can modify an $\epsilon$-fraction of values and remove
all ``obstructions" to submodularity. Since an $f$ that passes
the tester must have a low density of local violations, 
$f$ is $\epsilon$-close.
An understanding of these obstructions to submodularity
is often helpful for designing testers.
An obstruction is
just a subset of values that cannot exist in any submodular function.

Given a partial function $f$ that is not extendable, we would ideally 
like to find a small certificate for this property. 
Unfortunately, we will show that such certificates can be exponentially large. 
We give a partial function with a surprising property.
The partial function $f$ is defined on an exponentially
large set and is not extendable. If any \emph{single}
value is removed, then this new function is extendable.

\begin{definition} \label{def-min} For a partial function $f$,
let $\df(f)$ be the set of domain points when $f$ is defined. Let $\cA \subseteq \{0,1\}^n$. 
The \emph{restriction of $f$ to $\cA$}, $f|_{\cA}$,
is the partial function that agrees with $f$ on $\cA$ and is undefined everywhere
else. The partial function $f$ is \emph{minimally non-extendable} if 
$f|_\cA$ is extendable for all $\cA \subset \df(f)$.
\end{definition}

\begin{theorem} \label{thm-min} There exists a minimally non-extendable function $f$
such that $|\df(f)| = 2^{\Omega(n)}$.
\end{theorem}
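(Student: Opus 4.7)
The plan is to attack this via LP duality. The set of submodular functions on $\{0,1\}^n$ is a polyhedron cut out by the square inequalities $f(x) + f(x+\be_i+\be_j) \leq f(x+\be_i) + f(x+\be_j)$. For a partial function $f$ on $\cA \subseteq \{0,1\}^n$, extendability is equivalent to feasibility of the linear program with free variables $g(x)$ for $x \notin \cA$ and the prescribed values substituted on $\cA$. By Farkas' lemma, non-extendability is witnessed by a nonnegative combination of square inequalities whose LHS has zero coefficient at every $x \notin \cA$, and whose resulting linear functional on the values $\{f(a) : a \in \cA\}$ is strictly violated. Minimal non-extendability is then equivalent to the existence of such a dual certificate supported on $\cA$ together with the \emph{non-existence} of any certificate on a proper subset.

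The construction strategy is to build an explicit dual certificate whose essential support has size $2^{\Omega(n)}$. A natural arena is a middle layer $L_k = \{x : |x| = k\}$ of the hypercube, which already contains $\binom{n}{k} = 2^{\Omega(n)}$ points. For each pair $u, v \in L_k$ that differ by a swap, $u$ and $v$ share a common meet $u \wedge v \in L_{k-1}$ and join $u \vee v \in L_{k+1}$, and the corresponding square gives a submodularity inequality $f(u \wedge v) + f(u \vee v) \leq f(u) + f(v)$. I would select a cyclic sequence of swap-adjacent pairs threading through a large antichain $\cA \subseteq L_k$, assigning unit multipliers in such a way that every off-layer meet and join appears the same number of times with canceling sign. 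The surviving linear functional then lies entirely on $\cA$, and the prescribed values $f(a)$ can be tuned along the cycle (for instance, in a strictly monotone way under the cycle's natural orientation) so that the telescoped inequality is strictly violated.

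To verify minimality, for each $a_0 \in \cA$ I would exhibit an explicit submodular extension of $f|_{\cA \setminus \{a_0\}}$ to all of $\{0,1\}^n$. The natural recipe is to start from a canonical submodular baseline, such as a capped-rank function $\min(|S|, c)$, and add a small monotone correction that tracks the cycle up to the point $a_0$: breaking a single link in the cyclic dual certificate removes the topological obstruction and turns a ``multivalued'' path-integrated quantity into a single consistent one, which can then be realized by a concrete submodular function. Together with the certificate from the previous paragraph, this would show that $f$ is minimally non-extendable with $|\df(f)| = |\cA| = 2^{\Omega(n)}$.

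The main obstacle is the two-sided nature of the requirement: the prescribed values must be globally just-infeasible, requiring the exponential-length cyclic certificate to be tight, and yet locally just-feasible on every proper subset, requiring an explicit submodular completion for each single removal. Engineering the cycle so that all off-$\cA$ coefficients cancel while the surviving linear functional remains nontrivial on every $a \in \cA$, and then ruling out any shorter accidental dual certificate on a subset of $\cA$, will be where the delicate combinatorial design lies. I would expect this step to require a uniqueness argument for the certificate, perhaps via connectivity of a suitably defined auxiliary graph on $\cA$, so that breaking any one vertex disconnects it and kills all possible dual certificates at once.
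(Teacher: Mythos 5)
You have correctly identified the framework the paper actually uses: a Farkas/LP-duality certificate for the system of square inequalities, an exponentially long cyclic certificate, and a uniqueness argument so that deleting any single defined point destroys every certificate (the paper packages the dual combinatorially as ``path certificates'' --- matched collections of monotone paths with endpoints in $\df(f)$ and negative total value, Lemma~\ref{lem-matched}). However, your concrete construction of the certificate has a sign obstruction. If you only use squares whose two incomparable middle points $u,v$ lie in a single layer $L_k$, then in every inequality $f(u\wedge v)+f(u\vee v)-f(u)-f(v)\le 0$ the off-layer points $u\wedge v$ and $u\vee v$ occur with coefficient $+1$ only; a Farkas certificate is a \emph{nonnegative} combination of these inequalities, so the coefficient of each off-$\cA$ meet or join is a sum of nonnegative multipliers and can never cancel to zero. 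To cancel it you must bring in squares spanning adjacent pairs of layers, whose own middle points are again off-$\cA$, and so on --- which is precisely why the paper's $\df(f)$ is spread over four consecutive levels and its certificate consists of length-two paths (chained squares) matched between an upward and a downward family threaded along a Hamiltonian circuit of an $m$-cube. A single-layer antichain, as proposed, cannot carry any certificate, so the announced ``cyclic sequence of swap-adjacent pairs with canceling off-layer terms'' does not exist in that arena.

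The minimality half is likewise only a plan that defers exactly the step where the paper spends most of its effort. The paper does \emph{not} exhibit an explicit submodular extension for each deleted point; it instead engineers $\df(f)$ so that (i) no square of the cube contains more than two defined points and each defined point has at most one defined neighbor, which allows singleton edges in a certificate to be eliminated (Claims~\ref{clm-cancel} and~\ref{clm-single}), and (ii) any singleton-free matched certificate is forced, step by step around the cycle, to be the entire family $\cP_1,\ldots,\cP_M$ (Claim~\ref{clm-unique}); hence any proper restriction has no certificate and is extendable by Lemma~\ref{lem-matched}. Your alternative of writing down a capped-rank-plus-correction extension for each single deletion would in any case have to rule out ``accidental'' shorter certificates on subsets --- the very uniqueness statement you acknowledge but do not supply --- so as it stands the proposal is a correct high-level strategy with the two load-bearing constructions (a workable multi-layer certificate and the forcing/uniqueness argument) missing.
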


%
%
%
%
%
%

\subsection{The difficulty in testing submodularity} \label{sec-diff}

The values of $f$ can interact in non-trivial ways to create
obstructions to submodularity. Contrast this to monotonicity.
A partial function $f$ (on the hypercube) cannot be extended to a non-decreasing monotone function
iff there is a pair of sets $S \subset T$ such that $f(S) > f(T)$. 
There is always a certificate of size $2$ that a partial function
cannot be extended. So
this completely characterizes the obstructions to monotonicity, and is
indeed one of the reasons why monotonicity testers work. 
Our work implies that such a simple characterization does not exist for submodularity.
Indeed, as Theorem~\ref{thm-min} claims, obstructions to submodularity
can have an extremely complicated structure.

Functions that are far from being submodular can ``hide" their bad behavior.
In Theorem~\ref{thm:1-vsq}, we show the existence of a function $f$ with
exactly \emph{one} violated square, but making $f$ submodular requires
changing $2^{n/2}$ values. Somehow, even though the function is (in a weak sense)
``far" from submodular, the only local violation that manifests itself is a single
square. The functions described by the second part of Theorem~\ref{thm-tester} 
are constructed through generalizations of this example.

%
%

\subsection{Previous work} \label{sec-prev}

Property testing, which was defined in~\cite{RS96,GGR98},
is a well-studied field of theoretical computer science.
Efficient testers have been given for a wide variety of combinatorial,
algebraic, and geometric problems (see surveys ~\cite{FischerSurvey, GoldreichSurvey, RonSurvey}).
The problem of property testing for monotonicity over the hypercube has been
studied in~\cite{GGLRS00, DGLRRS99, FLNRR02, E04, FR, BCGM10}. 
In particular, monotonicity of a function over $\{0,1\}^n$
can be tested using $O(n^2 / \epsilon)$ non-adaptive queries \cite{DGLRRS99}
and $\Omega(\sqrt{n})$ queries are necessary \cite{FLNRR02}.

As mentioned earlier, the problem of testing submodularity was first raised
first by~\cite{PRR03}. They considered submodularity over general
grid structures (of which the hypercube is a special case).
Their focus was on testing submodularity over $2$-dimensional
grids.
Specifically,~\cite{PRR03} gave strong results for testing \emph{Monge matrices}.
Monge matrices are essentially submodular functions over the $n \times m$ integer grid. 
Here, the dimension is $2$, but the domain in each component is large.
In contrast, we are studying submodular functions over high-dimensional domains,
where each component is binary.
Hence, our problem is quite orthogonal to testing Mongeness, and we
need a different set of techniques.

Another related set of results is recent work on learning and approximating
submodular functions \cite{GHIM09,BH09}. Here, we want to examine a value oracle
through polynomially many queries (which is similar to our setting)
and learn sufficient information so that we are able to answer queries
about the function. The difference is that in this model,
we care about {\em multiplicative-factor} approximation to the original function.
An even more essential difference is that the input function is guaranteed to be submodular,
rather than possibly being corrupted. 
For example, \cite{GHIM09} shows that we can ``learn" a monotone submodular function
using polynomially many queries so that afterwards we can answer value queries
within a multiplicative $\widetilde{O}(\sqrt{n})$ factor, and this is optimal up to
logarithmic factors. In contrast, the input function in our model might
be masquerading as a submodular function but in truth be very far from being submodular.

\subsection{Organization}
The rest of the paper is organized as follows.
In Section~\ref{sec:tester}, we present our basic submodularity tester
and prove the first part of Theorem~\ref{thm-tester}.
In Section~\ref{sec:lower-bounds}, we present our construction of submodular functions
from lattices and prove the second part of Theorem~\ref{thm-tester}.
In Section~\ref{sec:LP}, we discuss extendability of submodular functions
and prove Theorem~\ref{thm-min}.
In Section~\ref{sec:reduction}, we present the reduction from monotonicity testing
to submodularity testing.
In Section~\ref{sec:future}, we discuss future directions.

\section{A subexponential submodularity tester}
\label{sec:tester}


\paragraph{The violated-square tester.}
\begin{compactitem}
\item For a parameter $q \in {\mathbb Z}$, repeate the following $q$ times.
\item Sample uniformly at random $x \in \{0,1\}^n$ and $i,j \in \{\ell: x_\ell = 0\}$.
If $$f(x) + f(x+\be_i+\be_j) > f(x+\be_i) + f(x+\be_j),$$ i.e. if $\{x,x+\be_i,x+\be_j,x+\be_i+\be_j\}$
is a violated square, then return NO.
\item If none of the tested squares is violated, then return YES.
\end{compactitem}

\medskip

Clearly, if the input function is submodular, the tester answers YES. We would like to understand
how well this tester performs in case the input function is $\epsilon$-far from being submodular.
The following observation is standard and reduces this question to a combinatorial problem
about violated squares.

\begin{lemma}
\label{lem:tester-eq}
The following two statements are equivalent:
\begin{compactitem}
\item The violated-square tester using $q(n,\epsilon)$  queries detects every function that is $\epsilon$-far from submodular
with constant probability.
\item For every function which is $\epsilon$-far from submodular, the density of violated squares is
$\Omega(1 / q(n,\epsilon))$.
\end{compactitem}
\end{lemma}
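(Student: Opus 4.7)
The plan is to translate both statements into the same quantitative relation between the number of tester iterations and the probability that a single iteration hits a violated square. Each iteration picks a triple $(x, i, j)$ with $x_i = x_j = 0$, which uniquely identifies a square in the sense of Definition~\ref{def:sq}. So the first task is to show that the tester's induced distribution on squares is within a constant factor of the uniform distribution (for which the measure of the violated squares is, by definition, their density).

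Concretely, I would compute the probability that a specific square with bottom corner $x$ and directions $i,j$ is chosen: conditioning on picking $x$ (probability $2^{-n}$) and on picking $\{i,j\}$ among the $k(x) \defeq |\{\ell : x_\ell = 0\}|$ zero coordinates (probability $1/\binom{k(x)}{2}$), the joint probability is $2^{-n}/\binom{k(x)}{2}$. The uniform distribution on squares assigns each square probability $1/\left(\binom{n}{2} 2^{n-2}\right)$. By a standard Chernoff bound, a $1 - 2^{-\Omega(n)}$ fraction of $x \in \{0,1\}^n$ satisfy $n/4 \le k(x) \le 3n/4$, so for such $x$ the two probabilities differ by only a constant factor. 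The total mass on the exceptional $x$ is exponentially small, and we may harmlessly assume $n$ is large enough that this is dominated by the density scales of interest. Hence, writing $\rho$ for the density of violated squares, the probability that one iteration of the tester detects a violation is $p = \Theta(\rho)$.

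With this correspondence in hand, the equivalence is the standard amplification argument. The probability that $q$ independent iterations all fail to find a violation is $(1-p)^q$, which is bounded away from $1$ (say, at most $1/3$) if and only if $pq = \Omega(1)$, i.e. iff $p = \Omega(1/q)$. Combined with $p = \Theta(\rho)$, this yields: the tester with $q(n,\epsilon)$ queries succeeds with constant probability on every $\epsilon$-far $f$ if and only if every such $f$ has violated-square density $\rho = \Omega(1/q(n,\epsilon))$, which is exactly the claimed equivalence.

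The only real subtlety is the mismatch between the tester's sampling distribution and the uniform distribution over squares; this is the step where I expect to be most careful, but it is resolved by the Chernoff observation above, after which the rest is routine.
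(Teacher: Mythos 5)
Your proposal is essentially the paper's argument: the paper's (omitted, one-line) proof simply treats each query as a uniformly random square and runs the standard amplification $(1-p)^q\le e^{-pq}$, exactly as you do in your last step. What you add is an explicit treatment of the mismatch between the tester's sampling distribution and the uniform distribution on squares, which the paper silently ignores, so your write-up is if anything more careful than the original.

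One caution about the step you flagged as the subtle one. Your Chernoff patch shows $p=\Theta(\rho)$ only after discarding the squares whose bottom corner $x$ has atypically few or many zero coordinates, and declaring their $2^{-\Omega(n)}$ contribution ``harmless''; but the lemma quantifies over all $\epsilon$-far functions and the paper explicitly works with exponentially small densities (e.g.\ $\epsilon\ge 2^{-n/10}$, densities like $\epsilon^{\sqrt{n}\log n}$), where that additive error need not be dominated by $\rho$. For the direction that is actually used (density $\Omega(1/q)$ implies detection), no truncation is needed at all: since $\binom{k(x)}{2}\le\binom{n}{2}$, the tester selects any fixed square with probability at least $\tfrac14\cdot\bigl(\binom{n}{2}2^{n-2}\bigr)^{-1}$, so $p\ge\rho/4$ unconditionally. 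For the converse direction the constant-factor claim $p=O(\rho)$ can genuinely fail: if the violated squares sit at vertices with very few zero coordinates, $p$ can exceed $\rho$ by a factor $\Theta(n^2)$, so from ``the tester works'' one only extracts $\rho=\Omega\bigl(1/(n^2 q)\bigr)$. This polynomial slack is implicitly accepted by the paper as well (its proof assumes uniform square sampling), so your conclusion stands for the purposes of Corollary~\ref{cor-tester}, but you should either state the equivalence up to $\mathrm{poly}(n)$ factors or replace the Chernoff step by the two one-sided bounds above.
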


Therefore, to understand this tester we need to understand the relationship between the distance from submodularity
and the density of violated squares. In the rest of this section, our main goal is to prove
the first part of Theorem~\ref{thm-tester},
i.e. the claim that for a function $\epsilon$-far from submodular, the density of violated squares must be at least
$\epsilon^{O(\sqrt{n} \log n)}$. Using Lemma~\ref{lem:tester-eq}, this implies Corollary~\ref{cor-tester}.
First, we prove the following lemma. 

\begin{lemma}
\label{lem:easy-fix}
Assume $\{x,x+\be_i,x+\be_j,x+\be_i+\be_j\}$ is a violated square. Then it is possible to decrease
all the values either in $\{y: y \leq x\}$ or in $\{y: y \geq x+\be_i+\be_j\}$ by a constant
such that the square $\{x,x+\be_i,x+\be_j,x+\be_i+\be_j\}$ is no longer violated and no new violated square is created.
\end{lemma}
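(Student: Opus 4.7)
The plan is to set the decrease constant $c$ to be exactly the violation magnitude
\[
c := f(x) + f(x+\be_i+\be_j) - f(x+\be_i) - f(x+\be_j) > 0,
\]
and show that either option --- decreasing all values in the principal ideal $L := \{y : y \leq x\}$ by $c$, or decreasing all values in the principal filter $U := \{y : y \geq x+\be_i+\be_j\}$ by $c$ --- does the job. The target square $\{x,x+\be_i,x+\be_j,x+\be_i+\be_j\}$ meets $L$ only at its minimum $x$ (because the other three corners all strictly exceed $x$ in coordinate $i$ or $j$), so option A decreases the left-hand side $f(x)+f(x+\be_i+\be_j)$ by $c$ and leaves the right-hand side untouched, making the square tight (and hence no longer violated). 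By the dual reasoning applied to $U$ and the corner $x+\be_i+\be_j$, option B works as well.

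The remaining, and main, task is to verify that no previously satisfied square becomes newly violated. For option A, fix an arbitrary square $T = \{z, z+\be_k, z+\be_\ell, z+\be_k+\be_\ell\}$ with $z_k=z_\ell=0$, and consider $T\cap L$. Since $L$ is downward closed, $T\cap L$ is a down-set of the Boolean $2$-cube $T$. Moreover $L$ is \emph{join-closed when restricted to the square}: if both $z+\be_k\leq x$ and $z+\be_\ell\leq x$ then $x_k=x_\ell=1$, forcing $z+\be_k+\be_\ell\leq x$. Hence $T\cap L$ is a sublattice of $T$, narrowing the possibilities to five patterns: $\emptyset$, $\{z\}$, $\{z,z+\be_k\}$, $\{z,z+\be_\ell\}$, or all of $T$.

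A four-way case check then finishes the argument. In the empty pattern nothing moves; in the full pattern every corner drops by $c$ and the square inequality is unaffected; in each of the two ``edge'' patterns both the LHS and the RHS of the square inequality drop by exactly $c$, for zero net change; and in the isolated pattern $\{z\}$ only the LHS drops, which weakens the inequality in the submodular direction. Hence no square is pushed past the submodular threshold. Option B is treated by the mirror argument: $U$ is a principal filter, so $T\cap U$ is an up-set that is meet-closed within each square, yielding the dual five-pattern enumeration and the same conclusion.

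The main (and essentially only) obstacle is noticing that the three-corner down-set $\{z,z+\be_k,z+\be_\ell\}$ cannot arise --- that is, principal ideals in $\{0,1\}^n$ behave as sublattices when restricted to 2-faces. Once this observation is in hand, the rest of the lemma is a routine case split and does not require any analytical inequality beyond choosing $c$ to match the violation exactly.
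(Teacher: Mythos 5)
Your proposal is correct and follows essentially the same route as the paper: decrease by exactly the deficit on the ideal $\{y : y \leq x\}$ (or dually the filter), and check every other square by how many of its corners lie in the modified set — your five-pattern enumeration via down-sets of a $2$-face coincides with the paper's four-case split on $x' \leq x$ and the bits $x_{i'}, x_{j'}$, with the impossibility of the three-corner pattern being implicit there. No gaps; the argument is sound.
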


\begin{proof}
Denote by $d = f(x) + f(x+\be_i+\be_j) - f(x+\be_i) - f(x+\be_j)$ the ``deficit" of the violated square.
One way to fix this square is to decrease the value of $f(x)$ by $d$; however, this might create new violated squares.
Instead, we decrease the value of $f(y)$ for every $y \leq x$; i.e., we define a new function $\tilde{f}(y) = f(y) - d$
for $y \leq x$, and $\tilde{f}(y) = f(y)$ otherwise. (Alternatively, we can define $\tilde{f}(y) = f(y) - d$ for
$y \geq x+\be_i+\be_j$, and $\tilde{f}(y) = f(y)$ otherwise; the analysis is symmetric and we omit this case.)

Consider any other square that was previously not violated, i.e.~$f(x') + f(x'+\be_{i'}+\be_{j'})
 \leq f(x'+\be_{i'}) + f(x'+\be_{j'})$. Note that $x'_{i'} = x'_{j'} = 0$.
We consider four cases:
\begin{itemize}
\item If $x'_\ell > x_\ell$ for some coordinate $\ell$, then we do not modify any value in the square
$\{x',x'+\be_{i'},x'+\be_{j'},x'+\be_{i'}+\be_{j'}\}$.
\item If $x' \leq x$ and both $x_{i'} = 0$ and $x_{j'} = 0$, then the only value we modify
in the square is $f(x')$, which is decreased by $d$. This cannot create a submodularity violation.
\item If $x' \leq x$ and exactly one of the coordinates $x_{i'}, x_{j'}$ is $1$, then we modify two values in the square;
for example $f(x')$ and $f(x'+\be_{i'})$. Since we decrease both by the same amount, this again cannot create
a submodularity violation.
\item If $x' \leq x$ and $x_{i'} = x_{j'} = 1$, then we decrease all four values in the square by the same amount.
Again, this cannot create a submodularity violation. \qed
\end{itemize}
\end{proof}

This means we can fix violated squares one by one, and the number of violated squares decreases by one
every time. The cost we pay for each fix is the number of points in the cube above or below the respective square.
Recall that we count the number of modified values overall, and hence what counts is the union of all the cubes
modified in the process. Intuitively, it is more frugal to choose up-closed cubes for violated squares that are above the middle
layer of the hypercube, and down-closed cubes for squares that are below the middle. A counting
argument gives the following.

\begin{lemma}
\label{lem:easy-fix-dist}
Let $\epsilon \in (0,e^{-5})$ and let $f$ have at most $\epsilon^{\sqrt{n} \log n} 2^n$ violated squares.
Then these violated squares can be fixed by modifying at most $\epsilon 2^n$ values.
\end{lemma}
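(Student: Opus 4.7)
Apply Lemma~\ref{lem:easy-fix} iteratively, fixing one violated square at a time. For a square with base $x$ at Hamming weight $k = |x|$, pick the lower cube $\{y : y \leq x\}$ of size $2^{k}$ when $k \leq (n-2)/2$, and the upper cube $\{y : y \geq x + \be_i + \be_j\}$ of size $2^{n-k-2}$ otherwise; this choice always gives the smaller of the two options, of size at most $2^{(n-2)/2}$, and places the cube entirely within one half of the hypercube. Let $U$ be the union of all chosen cubes; the goal is to show $|U| \leq \eps\, 2^n$. Lemma~\ref{lem:easy-fix} guarantees that no new violation is ever created and that the violated-square count drops by one at each step, so the process terminates after at most $|V| \leq \eps^{\sqrt{n}\log n}\, 2^n$ iterations.

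Next, I would partition the violated squares by the level of their base into a bottom tail ($k \leq n/2 - T$), a middle strip ($|k - n/2| < T$), and a top tail ($k \geq n/2 + T$), where $T = c\sqrt{n \log(1/\eps)}$ for a suitable constant $c$. Choose $c$ so that Chernoff's inequality gives $\sum_{j \leq n/2 - T}\binom{n}{j} \leq \frac{\eps}{4} \cdot 2^n$, and symmetrically on the top. Every lower cube with base at level $\leq n/2 - T$ is contained in the down-set $\{y : |y| \leq n/2 - T\}$, so the bottom-tail squares together contribute at most $\frac{\eps}{4}\cdot 2^n$ to $|U|$ ``for free''; the top tail is handled symmetrically.

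The delicate step is the middle strip, where the chosen cubes can be as large as $2^{(n-2)/2}$. The naive sum bound $|V| \cdot 2^{(n-2)/2}$ is not small enough for a fixed $\eps$ once $n$ is large, so I would instead bound $|U|$ level by level. Writing $L_k$ for the number of violated squares at base-level $k$, the union-of-down-shadows estimate
\[
\bigl|U \cap \{y : |y| = j\}\bigr| \;\leq\; \min\!\Bigl(\sum_{k \geq j} L_k \binom{k}{j},\;\binom{n}{j}\Bigr)
\]
(and its symmetric counterpart for upper cubes) combines with the two bounds $L_k \leq |V|$ and $L_k \leq n^{2}\binom{n}{k}$ at each level. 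Splitting each layer's contribution into a ``few violated squares'' regime near $k=n/2$ and a ``few possible squares'' regime near $k=n/2\pm T$, and matching each regime against the Chernoff tail of $\binom{n}{j}$, the two estimates can be balanced; the width $T \sim \sqrt{n \log(1/\eps)}$ and the exponent $\sqrt{n}\log n$ drop out of this balance, and $\eps \leq e^{-5}$ together with $n$ large make the numerical constants go through.

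The main obstacle is exactly this middle-strip step. The ``sum of cube sizes'' bound is genuinely too loose, so the proof must exploit the fact that the chosen cubes overlap substantially and that the hypercube's measure is tightly concentrated around level $n/2$; the union-of-shadows inequality above together with the two competing bounds on $L_k$ is the mechanism by which both effects get used simultaneously.
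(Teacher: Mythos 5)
Your skeleton matches the paper's: fix the violated squares one at a time via Lemma~\ref{lem:easy-fix}, choose down-cubes for bases below the middle layer and up-cubes above, and bound the union of the chosen cubes level by level, killing the levels far from $n/2$ with the Chernoff tail of $\sum_j \binom{n}{j}$ (your ``bottom tail is free'' observation is correct and is implicit in the paper). The gap sits exactly where you flag the main obstacle, the middle strip, and the balancing you sketch there would not go through as described. Inside the strip, the bound $L_k \le n^2\binom{n}{k}$ buys nothing: for $k$ within $T = c\sqrt{n\log(1/\eps)}$ of $n/2$ one has $\binom{n}{k} \ge \eps^{O(1)}2^n$ up to polynomial factors, so ``few possible squares'' times a cube size of $2^{\approx n/2}$ is astronomically larger than $\eps 2^n$, and even after capping each level by $\binom{n}{j}$ the levels $j$ within $T$ of $n/2$ alone can contribute on the order of $2^n$. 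Moreover, since $|V| \le \eps^{\sqrt{n}\log n}2^n$ while $n^2\binom{n}{k} \ge \eps^{O(1)}2^n$ throughout the strip, the bound $L_k \le |V|$ is the stronger one at every $k$ there for large $n$; the regime split over the base level $k$ is a red herring, and no combination of $L_k \le |V|$, $L_k \le n^2\binom{n}{k}$ and the cap $\binom{n}{j}$ closes the estimate unless you use one more fact.

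That missing fact is present in your shadow formula but never exploited: $\binom{k}{j} = \binom{k}{k-j}$ is tiny when $j$ is close to $n/2$, because every chosen down-cube has its apex at level $k \le n/2$ and hence $k - j$ is small. The paper splits the \emph{output} levels at $n/2 - a\sqrt{n}$ with $a = \frac12\ln(1/\eps)$: levels $j \le n/2 - a\sqrt{n}$ are absorbed by the Chernoff tail $2^n e^{-a^2} = \eps^{\frac14\ln(1/\eps)}2^n \le \frac12\eps 2^n$ (this is where $\eps < e^{-5}$ enters), while across the levels $j > n/2 - a\sqrt{n}$ each cube contributes at most $\sum_{d \le a\sqrt{n}}\binom{k}{d} \le n^{a\sqrt{n}}$ points, so the plain union bound over at most $\eps^{\sqrt{n}\log n}2^n$ cubes gives $\eps^{\sqrt{n}\ln n}\, n^{\frac12\sqrt{n}\ln(1/\eps)}2^n = \eps^{\frac12\sqrt{n}\ln n}2^n$ --- this is exactly where the exponent $\sqrt{n}\log n$ in the hypothesis is spent, and no partition of the squares by base level is needed (every base below the middle already has $k \le n/2$). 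Your threshold scale $\sqrt{n\log(1/\eps)}$ would in fact also work for this $j$-split, but the decisive inequality is the per-cube, per-level count $\binom{k}{k-j} \le n^{O(a\sqrt{n})}$ combined with $L_k \le |V|$, not any tradeoff involving $n^2\binom{n}{k}$; as written, your proposal leaves precisely that step unproved.
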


\begin{proof}
Denote by $B$ the set of bottom points for the violated squares which are below the middle layer; i.e.
we have $||x||_1 \leq n/2$ for each $x \in B$. (The squares above the middle layer can be handled symmetrically.)
We choose to modify the down-closed cube, $C_x = \{ y \in \{0,1\}^n: y \leq x\}$, for each $x \in B$.
We can fix the violated square one by one, by modifying values in the cubes $C_x$. The total number of modified
values is $| \bigcup_{x \in B} C_x|$. We estimate the cardinality of this union by combining two simple bounds across
levels of the hypercube. Denote $L_j = \{x \in \{0,1\}^n: ||x||_1 = j\}$. We have
$$ \Big| \bigcup_{x \in B} C_x\Big| = \sum_{j=0}^{n/2} \Big| \bigcup_{x \in B} (C_x \cap L_j) \Big|.$$
First, by the union bound, we have
$$ \Big|\bigcup_{x \in B} (C_x \cap L_j)\Big| \leq \sum_{x \in B} |C_x \cap L_j| = \sum_{x \in B} {||x||_1 \choose j}
 \leq |B| {n/2 \choose j}. $$
Secondly, we have (trivially)
$$ \Big|\bigcup_{x \in B} (C_x \cap L_j)\Big| \leq |L_j| = {n \choose j}.$$
We choose the better of the two bounds depending on $j$. In particular, for $j \leq n/2 - a\sqrt{n}$, we get
$ \sum_{j=0}^{n/2-a\sqrt{n}} {n \choose j}  = 2^n \Pr[X \leq n/2 - a\sqrt{n}] \leq 2^n e^{-a^2} $
where $X$ is a binomial $Bi(n,1/2)$ random variable and the last inequality is a standard Chernoff bound.
For $j > n/2 - a \sqrt{n}$, we use
$ \sum_{j=n/2-a\sqrt{n}}^{k} |B| {k \choose j} = |B| \sum_{j=0}^{a\sqrt{n}} {k \choose j} \leq |B| k^{a \sqrt{n}} \leq |B| n^{a \sqrt{n}}. $
We conclude that
$$ \Big| \bigcup_{x \in B} C_x\Big| = \sum_{j=0}^{n/2} \Big| \bigcup_{x \in B} (C_x \cap L_j) \Big|
 \leq 2^n e^{-a^2} + |B| n^{a \sqrt{n}}.$$
Let $a = \frac12 \ln (1/\epsilon)$; we also assume that $|B| \leq 2^n \epsilon^{\sqrt{n} \ln n}$.
For $\epsilon \in (0,e^{-5})$, this implies 
$$ \Big| \bigcup_{x \in B} C_x\Big| \leq 2^n e^{-(\frac12 \ln (1/\epsilon))^2} + 2^n \epsilon^{\sqrt{n} \ln n} n^{\frac12 \sqrt{n} \ln (1/\epsilon)}
= (\epsilon^{\frac14 \ln (1/\epsilon)} + \epsilon^{\frac12 \sqrt{n} \ln n}) 2^n \leq \frac12 \epsilon 2^n. $$
\qed
\end{proof}

This lemma immediately implies the first part of Theorem~\ref{thm-tester}.
Assuming that $f$ is $\epsilon$-far from being submodular,
we get that the number of violated squares is at least $\epsilon^{\sqrt{n} \log n} 2^n$ for $\epsilon \in (0,e^{-5})$,
i.e. the density of violated squares is at least $\epsilon^{\sqrt{n} \log n}$.

\section{Few violated squares, yet large distance}
\label{sec:lower-bounds}

We now give a construction of submodular functions that have
large distance but a relatively small fraction of violated squares.
As we mentioned earlier, these bounds are nowhere near our positive results.
Nonetheless, we are able to show a significant difference from monotonicity. 

Our first tool to construct these functions is an interesting
family of submodular functions. 
It is known that that the set of minimizers of a submodular function always
forms a lattice\footnote{A lattice is any partial order with the operations of "meet" and "join".
In our setting, this means a subset of $\{0,1\}^n$ closed under
taking coordinate-wise minimum and maximum. Or equivalently, a family
of sets closed under taking intersections and unions.} \cite{Edmonds70}.
We prove that conversely, for any lattice ${\cal L} \subset \{0,1\}^n$ there is a submodular
function whose set of minimizers is exactly $\cal L$.
We will then piece together these submodular functions to construct
a non-submodular function with the desired properties.

\subsection{Submodular functions from lattices}

\begin{lemma}
\label{lem:lattice}
Let ${\cal L} \subset \{0,1\}^n$ be a lattice, i.e a set of points closed under coordinate-wise
minimum and maximum. Then the following Hamming distance function is submodular:
$$ d_{\cal L}(x) = \min_{y \in {\cal L}} ||x-y||_1. $$
\end{lemma}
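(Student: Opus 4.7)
The plan is to prove the equivalent join/meet form of submodularity, namely
$$ d_{\cal L}(x \vee x') + d_{\cal L}(x \wedge x') \leq d_{\cal L}(x) + d_{\cal L}(x') \qquad \forall\, x,x' \in \{0,1\}^n,$$
where $\vee$ and $\wedge$ denote coordinatewise max and min. This is well known to be equivalent to the marginal-monotonicity condition in Definition~\ref{def:der}.

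First, I would pick optimal witnesses: let $y \in {\cal L}$ attain $d_{\cal L}(x) = \|x-y\|_1$ and let $y' \in {\cal L}$ attain $d_{\cal L}(x') = \|x'-y'\|_1$. Because ${\cal L}$ is closed under coordinatewise max and min, both $y \vee y'$ and $y \wedge y'$ lie in ${\cal L}$, so they are legal candidates in the definitions of $d_{\cal L}(x \vee x')$ and $d_{\cal L}(x \wedge x')$. This immediately gives
$$ d_{\cal L}(x \vee x') + d_{\cal L}(x \wedge x') \leq \|(x \vee x') - (y \vee y')\|_1 + \|(x \wedge x') - (y \wedge y')\|_1. $$
The task then reduces to proving that the right-hand side is bounded above by $\|x-y\|_1 + \|x'-y'\|_1$.

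The second step is to establish the pointwise inequality on which everything rests: for any $a,b,c,d \in \{0,1\}$,
$$ |\max(a,c) - \max(b,d)| + |\min(a,c) - \min(b,d)| \leq |a-b| + |c-d|. $$
This is a short case analysis. After using the symmetry that swaps $(a,c)$ with $(b,d)$, one may assume $a \leq c$. The subcase $b \leq d$ gives equality termwise, while the subcase $b > d$ reduces to the rearrangement-type inequality $|c-b|+|a-d| \leq |a-b|+|c-d|$, which one checks directly (this is the fact that ``aligning'' the two pairs in the same order cannot increase the total $\ell_1$ transportation cost). Summing this pointwise inequality over all $n$ coordinates, applied to the quadruple $(x_i, y_i, x'_i, y'_i)$ in each coordinate, yields
$$ \|(x \vee x') - (y \vee y')\|_1 + \|(x \wedge x') - (y \wedge y')\|_1 \leq \|x-y\|_1 + \|x'-y'\|_1 = d_{\cal L}(x) + d_{\cal L}(x'), $$
which combined with the previous display finishes the proof.

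The main obstacle is really only the coordinatewise inequality in the second step; once that is in hand, the lattice-closure property of ${\cal L}$ and the freedom to pick the witnesses $y, y'$ independently of $x \vee x'$ and $x \wedge x'$ do all the structural work. I do not anticipate any technical surprises beyond cleanly organizing the small case analysis.
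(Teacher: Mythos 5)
Your proof is correct and follows essentially the same route as the paper: pick optimal witnesses $y,y'$, use lattice closure of ${\cal L}$ to reduce to the inequality $\|(x\vee x')-(y\vee y')\|_1+\|(x\wedge x')-(y\wedge y')\|_1\le\|x-y\|_1+\|x'-y'\|_1$, and then verify it. The only difference is presentational: the paper checks this inequality by expanding symmetric differences in set notation and merging terms, while you check the same inequality coordinate-wise via the scalar max/min inequality (a minor quibble: the symmetry that lets you assume $a\le c$ is swapping the pairs $(a,b)\leftrightarrow(c,d)$, not $(a,c)\leftrightarrow(b,d)$).
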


\begin{proof}[Lemma~\ref{lem:lattice}]
In this proof, we use the set-function notation and identify $\{0,1\}^n$ with subsets of $[n]$.
A lattice ${\cal L} \subset \{0,1\}^n$ is a family of sets closed under taking unions
and intersections. The distance function $d$ can be written as
$$ d(S) = \min_{L \in {\cal L}} |S \Delta L| $$
where $|S \Delta L|$ denotes the symmetric difference. 
Assume that $d(S) = |S \Delta U|$ and $d(T) = |T \Delta V|$ for some $U,V \in {\cal L}$.
We want to prove $d(S \cup T) + d(S \cap T) \leq d(S) + d(T)$. We prove in fact that
$$ |(S \cup T) \Delta (U \cup V)| + |(S \cap T) \Delta (U \cap V)| \leq |S \Delta U| + |T \Delta V| $$
which is sufficient since $U \cup V, U \cap V \in {\cal L}$ by the lattice property,
and therefore $d(S \cup T) \leq |(S \cup T) \Delta (U \cup V)|, d(S \cap T) \leq |(S \cap T) \Delta (U \cap V)|$.
These two symmetric differences can be bounded as follows:
\begin{eqnarray*}
|(S \cup T) \Delta (U \cup V)| & = &
 |(S \cup T) \setminus (U \cup V)| + |(U \cup V) \setminus (S \cup T)| \\
 & = & |S \cap \bar{U} \cap \bar{V}| + |\bar{S} \cap T \cap \bar{U} \cap \bar{V}|
  + |U \cap \bar{S} \cap \bar{T}| + |\bar{U} \cap V \cap \bar{S} \cap \bar{T}| \\
  & \leq & |S \cap \bar{U} \cap \bar{V}| + |\bar{S} \cap T \cap \bar{V}|
  + |U \cap \bar{S} \cap \bar{T}| + |\bar{U} \cap V \cap \bar{T}|,
\end{eqnarray*} 
\begin{eqnarray*}
|(S \cap T) \Delta (U \cap V)| & = &
 |(S \cap T) \setminus (U \cap V)| + |(U \cap V) \setminus (S \cap T)| \\
 & = & |S \cap T \cap \bar{V}| + |S \cap T \cap \bar{U} \cap V|
  + |U \cap V \cap \bar{T}| + |U \cap V \cap \bar{S} \cap T| \\
 & \leq & |S \cap T \cap \bar{V}| + |S \cap \bar{U} \cap V|
  + |U \cap V \cap \bar{T}| + |U \cap \bar{S} \cap T|.
\end{eqnarray*} 
Adding up the two bounds and merging terms such as
$|S \cap \bar{U} \cap \bar{V}| + |S \cap \bar{U} \cap V| = |S \cap \bar{U}|$,
we obtain
\begin{eqnarray*}
|(S \cup T) \Delta (U \cup V)| + |(S \cap T) \Delta (U \cap V)| & \leq &
 |S \cap \bar{U}| + |T \cap \bar{V}| + |U \cap \bar{S}| + |V \cap \bar{T}|  
 =  |S \Delta U| + |T \Delta V|.
\end{eqnarray*}
\qed
\end{proof}

Considering the known fact that the minimizers of any submodular function
form a lattice, we get the following characterization.

\begin{corollary}
Let ${\cal S} \subseteq \{0,1\}^N$.
Then the following statements are equivalent:
\begin{compactenum}
\item $\cal S$ is a lattice.
\item $\cal S$ is the set of minimizers of some submodular function.
\item The Hamming distance function $d_{\cal S}(x) = \min_{y \in {\cal S}} ||x-y||_1$
is submodular.
\end{compactenum}
\end{corollary}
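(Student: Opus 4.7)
The plan is to prove the equivalence by closing a cycle of implications $(1) \Rightarrow (3) \Rightarrow (2) \Rightarrow (1)$, assuming implicitly that $\cS$ is nonempty (so that $d_\cS$ is well-defined).

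The implication $(1) \Rightarrow (3)$ is already established by Lemma~\ref{lem:lattice}, so no additional work is needed.

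For $(3) \Rightarrow (2)$, I would simply observe that $d_\cS$ is a nonnegative function and that $d_\cS(x) = 0$ precisely when $x \in \cS$ (taking $y = x$ in the minimum on one side, and using nonnegativity of $\ell_1$-norms on the other). Hence the set of minimizers of $d_\cS$ is exactly $\cS$, and $d_\cS$ is submodular by hypothesis. So $\cS$ is exhibited as the minimizer set of a concrete submodular function, namely $d_\cS$ itself.

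For $(2) \Rightarrow (1)$, I would invoke the classical result cited in the paragraph just before the corollary (attributed to Edmonds): if $f$ is submodular and $S, T$ are both minimizers, then applying the submodularity inequality $f(S \cup T) + f(S \cap T) \le f(S) + f(T)$ together with the fact that $f(S) = f(T)$ is the global minimum forces $f(S \cup T) = f(S \cap T) = f(S)$, so that $S \cup T$ and $S \cap T$ are also minimizers. This shows that the minimizer set is closed under meet and join, i.e.\ a lattice.

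There is no real obstacle here; the only substantive content is Lemma~\ref{lem:lattice}, and the remaining two implications are either a one-line observation or a well-known fact quoted above. The main care is to ensure $\cS \ne \emptyset$ so that $d_\cS$ makes sense (the empty case can be treated separately or excluded by convention).
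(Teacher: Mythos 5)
Your proof is correct and follows essentially the same route as the paper's: $(1)\Rightarrow(3)$ by Lemma~\ref{lem:lattice}, $(3)\Rightarrow(2)$ by noting that $d_{\cal S}$ attains its minimum value $0$ exactly on $\cal S$, and $(2)\Rightarrow(1)$ by the standard argument that the submodularity inequality combined with minimality forces $S\cup T$ and $S\cap T$ to be minimizers as well. Your remark about requiring $\cS\neq\emptyset$ is a reasonable (implicit in the paper) precaution, but otherwise there is nothing to add.
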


\subsection{Functions with one violated square}

We start with the following counter-intuitive result.

\begin{theorem}
\label{thm:1-vsq}
For any $n$, there is a function $f:\{0,1\}^n \rightarrow \RR$ which has
exactly one violated square but $2^{n/2}$ values must be modified to make it submodular.
\end{theorem}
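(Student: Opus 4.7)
Fix $v \in \{0,1\}^n$ with $v_1 = v_2 = 0$ and $|v| \approx n/2$, so that the downset $\{y : y \leq v\}$ has size $\Omega(2^{n/2})$; the violated square will be $Q_0 = \{v, v+\be_1, v+\be_2, v+\be_1+\be_2\}$. Partition the cube into four cells $C_{ab} = \{y : y_1 = a, y_2 = b\}$ and define $f$ to be modular on each cell, $f|_{C_{ab}}(y_{\geq 3}) = \sum_{k \geq 3} c_{ab,k}\, y_k + k_{ab}$. The within-cell squares are automatically tight. The cross-cell $(1,k)$ and $(2,k)$ constraints reduce to partial-order relations $c_{00,k} \geq c_{10,k} \geq c_{11,k}$ and $c_{00,k} \geq c_{01,k} \geq c_{11,k}$. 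The cross-cell $(1,2)$ constraint at $y_{\geq 3}$ becomes a linear functional $L(y_{\geq 3}) = \sum_k d_k y_k + K$ with $d_k = c_{00,k} + c_{11,k} - c_{01,k} - c_{10,k}$. Choose the coefficients so $d_k = 2v_k - 1$ (for instance, $(c_{00,k}, c_{01,k}, c_{10,k}, c_{11,k}) = (1,0,0,0)$ if $v_k = 1$ and $(1,1,1,0)$ if $v_k = 0$), and tune the four constants $k_{ab}$ so that $L$'s unique maximum at $y_{\geq 3} = v_{\geq 3}$ is just barely positive. A direct case analysis then confirms that $Q_0$ is the unique violated square of $f$.

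\textbf{Lower bound via tightness propagation.} Let $g$ be submodular and write $\epsilon(y) = g(y) - f(y)$. Submodularity of $g$ at $Q_0$ forces $\epsilon(v) + \epsilon(v+\be_1+\be_2) - \epsilon(v+\be_1) - \epsilon(v+\be_2) \leq -L(v_{\geq 3}) < 0$, so some corner of $Q_0$ has $\epsilon \neq 0$; assume $\epsilon(v) < 0$ (the remaining three cases are symmetric). The modular structure inside each cell makes every within-cell $(k,l)$-square tight. In particular, for each tight within-cell square in $C_{00}$ with $v$ as a side corner (there are $|O(v_{\geq 3})| \cdot |Z(v_{\geq 3})|$ of these, writing $O(v), Z(v)$ for the one- and zero-coordinate sets), the submodular inequality for $g$ reads $\epsilon(v-\be_k) + \epsilon(v+\be_l) - \epsilon(v-\be_k+\be_l) \leq \epsilon(v) < 0$, forcing $\epsilon$ to be nonzero at one of $v-\be_k, v+\be_l, v-\be_k+\be_l$. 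Iterating this argument inductively down the downset of $v$---each newly ``infected'' point at layer $j$ forces further infection at layer $j-1$ via its own tight side-squares---the support of $\epsilon$ is forced to cover essentially the entire downset of $v$, of size $2^{|v|} = \Omega(2^{n/2})$.

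\textbf{Main obstacle.} The main technical difficulty is making the propagation rigorous: tight-square constraints are inequalities rather than equalities, so $\epsilon$ could in principle vanish at some intermediate points while the chain of constraints is still satisfied collectively. I expect the right argument to track the \emph{sign pattern} of $\epsilon$ along chains of side-squares descending from $v$ towards the all-zeros vector, showing that each step in such a chain must preserve a nonzero $\epsilon$-value of an appropriate sign; a counting argument over chains then covers the whole downset. A secondary subtlety is ruling out cheaper ``mixed'' fixes that adjust several corners of $Q_0$ simultaneously, but this reduces by symmetry to the same propagation analysis applied to whichever corner is chosen as the nonzero anchor.
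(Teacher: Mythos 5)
Your construction is sound and is essentially the paper's construction in disguise: your two cell-coordinates play the role of the paper's two special bits $(a,b)$, your cell-wise modular functions are the paper's linear slices (for the one-point lattice $\{v\}$ the Hamming-distance slice $d_{\cal L}$ is linear), and your functional $L$ is the paper's quantity $f(0,0,x)+f(1,1,x)-f(0,1,x)-f(1,0,x)=1-2d_{\cal L}(x)$, uniquely positive at $x=v$. The issue is the lower bound, and it is precisely the step you flag as unresolved: the ``tightness propagation'' down the downset is not an argument yet, because, as you note, the tight-square inequalities for $\eps=g-f$ only force \emph{some} corner of each side-square to be nonzero, and a chain-by-chain sign-tracking induction of the kind you sketch does not obviously cover a constant fraction of the downset (a single infected point at layer $j$ infects only one point at layer $j-1$ per square, and different chains can reuse the same infected points). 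So the heart of the theorem is missing.

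The paper closes this with two short lemmas that avoid square-by-square propagation altogether. For the corners where $\eps$ must be \emph{negative} ($v$ and $v+\be_1+\be_2$): since $f$ is linear on the cell, $h=\eps$ is submodular there; if some $y\le v$ in the cell had $h(y)=0$, then along a monotone path from $y$ up to $v$ the value passes from $0$ to a negative number, so some edge has $\partial_i h(y')<0$; submodularity alone then forces $\partial_i h(z')<0$ for \emph{every} $z'\ge y'$ with $z'_i=0$, and the roughly $n/2$ coordinates that are zero in $v$ (hence in $y'$) generate $2^{n/2}$ disjoint such edges, each of which must contain a nonzero value of $h$. One negative derivative thus propagates monotonically upward with no sign bookkeeping, which is exactly the mechanism your sketch is missing. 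For the corners where $\eps$ must be \emph{positive} ($v+\be_1$ and $v+\be_2$) the situation is not ``symmetric'' to the negative case, contrary to your remark: there you need the other lemma, proved by induction on dimension (partition the cube by the first nonzero coordinate), showing that a submodular function that is positive at the bottom of a cube is nonzero on at least half of that cube; you apply it to the up-set of $v+\be_1$ inside its cell. Your secondary worry about ``mixed fixes'' is indeed a non-issue, exactly as you say: any submodular $g$ differs from $f$ at some corner of the violated square, and whichever corner it is, the corresponding lemma anchors the count. With those two lemmas in place your outline becomes the paper's proof; without them, the propagation step is a genuine gap.
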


We remark that this statement is tight in the sense that for any function with
exactly one violated square, it is sufficient to modify $2^{n/2}$ values 
(we leave the proof as an exercise, using Lemma~\ref{lem:easy-fix}).
To prove Theorem~\ref{thm:1-vsq}, we use Lemma~\ref{lem:lattice} which says that any
lattice in $\{0,1\}^n$ yields a natural submodular function. This function does not have
any violated squares. However, we will add two additional dimensions and extend
the function in such a way that each point of the lattice will produce exactly
one violated square. Moreover, due to the nature of the distance function,
the function we construct will be a linear function in a large neighborhood
of each violated square. This will imply that we cannot simply change one value
in each violated square if we want to make the function submodular - such changes
would propagate and force many other values to be changed as well.
We make this argument precise later. The construction is as follows.

\paragraph{Construction.}
{\em Given:} Lattice ${\cal L} \subset \{0,1\}^n$.
{\em Output:} Function $f:\{0,1\}^{n+2} \rightarrow \RR$.
\begin{compactitem}
\item We denote the arguments of $f$ by $(a,b,x)$ where $x \in \{0,1\}^n$ and $a,b \in \{0,1\}$.
\item Let $f(0,0,x) = ||x||_1 = \sum_{i=1}^{n} x_i$.
\item Let $f(1,1,x) = 1 - ||x||_1 = 1 - \sum_{i=1}^{n} x_i$.
\item Let $f(0,1,x) = f(1,0,x) = d_{\cal L}(x)$, the Hamming distance function from $\cal L$.
\end{compactitem}

\begin{lemma}
\label{lem:L-vsq}
The function $f(a,b,x)$ constructed above has exactly $|{\cal L}|$ violated squares, of the form
$\{ (0,0,x)$, $(0,1,x)$, $(1,0,x)$, $(1,1,x) \}$ for each $x \in {\cal L}$.
\end{lemma}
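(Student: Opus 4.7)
The plan is a case analysis on squares in $\{0,1\}^{n+2}$ according to which two of the $n+2$ coordinates are allowed to vary. Split the coordinates into the first two, $a,b$, and the remaining $n$, which we call $x$. Then every square falls into one of three types: (I) both varying coordinates lie in $x$, (II) exactly one varying coordinate is $a$ or $b$, (III) the two varying coordinates are precisely $a$ and $b$. I would show that Types I and II contribute no violated squares, and that Type III contributes a violated square exactly when $x \in {\cal L}$.

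For Type I, fix $(a,b)$. If $(a,b) \in \{(0,0),(1,1)\}$, then $f$ restricted to $x$ is an affine function of $\|x\|_1$, hence modular, so every such square satisfies the submodularity inequality with equality. If $(a,b) \in \{(0,1),(1,0)\}$, then $f(a,b,\cdot) = d_{\cal L}$, which is submodular by Lemma~\ref{lem:lattice}; hence no violated squares arise from Type I.

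For Type II, I would write out the inequality defining a violated square. Consider, say, the square $\{(0,b,y),(1,b,y),(0,b,y+\be_i),(1,b,y+\be_i)\}$ with $y_i=0$. The violation condition $f(0,b,y)+f(1,b,y+\be_i) > f(1,b,y)+f(0,b,y+\be_i)$ reduces, after substituting the construction, to $|d_{\cal L}(y+\be_i)-d_{\cal L}(y)|>1$ (with appropriate sign depending on $b$); the analogous statement holds when it is $b$ that varies rather than $a$, using the symmetry $f(0,1,x)=f(1,0,x)$. Since $d_{\cal L}$ is a Hamming distance function and is therefore $1$-Lipschitz under flipping a single coordinate, no Type II violation can occur.

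For Type III, the only square is $\{(0,0,y),(0,1,y),(1,0,y),(1,1,y)\}$ for a fixed $y \in \{0,1\}^n$. Substituting the definitions, the violation condition becomes
\[
\|y\|_1 + (1-\|y\|_1) \;>\; 2\,d_{\cal L}(y),
\]
i.e.\ $d_{\cal L}(y) < 1/2$, which holds iff $d_{\cal L}(y)=0$, iff $y \in {\cal L}$. So Type III yields exactly $|{\cal L}|$ violated squares, precisely those listed in the statement. There is no real obstacle: the whole proof is case-checking plus one use of Lemma~\ref{lem:lattice} and the $1$-Lipschitz property of Hamming distance.
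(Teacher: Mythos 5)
Your proof is correct and follows essentially the same route as the paper: the same three-way case analysis on how many of the two special coordinates $a,b$ vary, using submodularity/linearity of the fixed-$(a,b)$ slices (via Lemma~\ref{lem:lattice}), the $1$-Lipschitz property of $d_{\cal L}$ for the mixed case, and the computation $2d_{\cal L}(x)<1 \iff x\in{\cal L}$ for the $(a,b)$-square. No gaps; nothing further needed.
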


\begin{proof}
Observe that for any fixed $a,b \in \{0,1\}$, $f(a,b,x)$ is a submodular function of $x$.
Therefore, there is no violated square $\{ z, z+\be_i, z+\be_j, z+\be_i+\be_j \}$
unless at least one of $i,j$ is a special bit. 

If exactly one of $i,j$ is a special bit, we can assume that it is the first special bit.
First assume the other special bit is $0$, therefore we are looking at a  square with values
$f(0,0,x), f(1,0,x), f(0,0,x+\be_i, f(1,0,x+\be_i)$. By construction,
we know that $f(0,0,x+\be_i) - f(0,0,x) = 1$ and $f(1,0,x+\be_i) - f(1,0,x)
= d_{\cal L}(x+\be_i) - d_{\cal L}(x) \leq 1$, therefore the square cannot be violated.
Similarly, if the other special bit is $1$, we are looking at a square with values
$f(0,1,x), f(1,1,x), f(0,1,x+\be_i, f(1,1,x+\be_i)$. Here,
we always have $f(1,1,x+\be_i) - f(1,1,x) = -1$, and $f(0,1,x+\be_i) - f(0,1,x)
 = d_{\cal L}(x+\be_i) - d_{\cal L}(x) \geq -1$. So again, the square cannot
be violated.

Finally, consider a square where $i,j$ are exactly the special bits. The square has values
$f(0,0,x)$, $f(0,1,x)$, $f(1,0,x)$, $f(1,1,x)$. Observe that $f(0,0,x) + f(1,1,x) = 1$,
and $f(0,1,x) + f(1,0,x) = 2 d_{\cal L}(x)$. The square is violated if and only if
$2 d_{\cal L}(x) < 1$, i.e. when $x \in {\cal L}$. This means that we have a one-to-one
correspondence between violated squares and the points of the lattice.
\qed
\end{proof}

Thus we can generate functions with a prescribed number of violated squares, depending
on our initial lattice $\cal L$. The simplest example is generated by ${\cal L} = \{x\}$
being a 1-point lattice. In this case, it is easy to verify directly that the function
$d_{\cal L}(x)$ is submodular, and hence our construction produces exactly one violated
square.

The second part of our argument, however, should be that such a function
is not very close to submodular. In particular, consider ${\cal L} = \{x\}$ where $||x||_1=n/2$.
Suppose that we want to modify some values so that the function $f$ becomes submodular.
We certainly have to modify at least one value in the violated square
$\{ (a,b,x): a,b \in \{0,1\}\}$. However, for each fixed choice of $a,b \in \{0,1\}$,
the function $f(a,b,x)$ is linear. The last point in our argument is that it is impossible
to modify a small number of values ``in the middle" of a linear function (with many values
both above and below), so that the resulting function is submodular. First, we prove the following.

\begin{lemma}
\label{lem:increase-value}
Suppose $f:\{0,1\}^n \rightarrow \RR$ is a submodular function and $f(0) > 0$.
Then there are at least $2^{n-1}$ points $x \in \{0,1\}^n$ such that $f(x) \neq 0$.
\end{lemma}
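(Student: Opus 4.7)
The plan is to induct on $n$. The base case $n=1$ is immediate: the single point $0$ already witnesses a nonzero value, and $2^{n-1} = 1$.

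For the inductive step, split $\{0,1\}^n$ by the last coordinate and let $f_0,f_1 : \{0,1\}^{n-1} \to \RR$ be the restrictions $f_0(x) = f(x,0)$ and $f_1(x) = f(x,1)$. Both are submodular (restrictions of a submodular function), and $f_0(0) = f(0) > 0$, so the inductive hypothesis already gives at least $2^{n-2}$ nonzero values of $f$ in the hyperplane $x_n=0$. The job is to produce $2^{n-2}$ more nonzero values in the hyperplane $x_n = 1$, and I would split into two cases on the sign of $f(\be_n)$.

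If $f(\be_n) = f_1(0) > 0$, I simply apply the inductive hypothesis to $f_1$ to obtain another $2^{n-2}$ nonzero values, totalling at least $2^{n-1}$. The interesting case is $f(\be_n) \leq 0$; here induction on $f_1$ is unavailable, but submodularity saves us. The key observation is that
\[ \partial_n f(0) = f(\be_n) - f(0) < 0, \]
and by submodularity the marginals $\partial_n f(x)$ are nonincreasing in $x$, so $\partial_n f(x) < 0$ for every $x$ with $x_n = 0$. This gives $f_1(x) < f_0(x)$ for \emph{every} $x \in \{0,1\}^{n-1}$, so at each such $x$ at least one of $f_0(x), f_1(x)$ must be nonzero, again yielding at least $2^{n-1}$ nonzero values of $f$.

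The main subtlety is this second case: it relies on the global propagation of a single negative marginal under submodularity, which is exactly what lets us conclude that the two slices $f_0$ and $f_1$ cannot simultaneously vanish at the same point. Once this is in hand, the two cases cover all possibilities and close the induction.
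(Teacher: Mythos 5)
Your proof is correct, and it reaches the tight bound by the same two mechanisms as the paper (induction on $n$, plus the fact that a negative marginal $\partial_n f(0)<0$ propagates by submodularity to $\partial_n f(x)<0$ for all $x$, forcing one nonzero value per antipodal pair in direction $n$). The difference is in the inductive decomposition. You fix a single coordinate and split the cube into the two half-cubes $x_n=0$ and $x_n=1$, branching only on the sign of $f(\be_n)$: if $f(\be_n)>0$ you apply the hypothesis to both halves, and otherwise the pairing argument alone already yields $2^{n-1}$ nonzero points. The paper instead examines all coordinates: if some $f(\be_i)\leq 0$ it runs the same pairing argument in direction $i$, and in the remaining case (all $f(\be_i)>0$) it partitions $\{0,1\}^n\setminus\{0\}$ into the sets $Q_i$ of points whose first nonzero coordinate is $i$ (subcubes of dimension $n-i$ rooted at $\be_i$) and applies the hypothesis to each. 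Your version is arguably cleaner, since only one coordinate ever needs to be inspected and the recursion is on two congruent half-cubes rather than $n$ subcubes of varying dimension; the paper's $Q_i$ partition buys nothing extra here beyond (implicitly) also counting the origin. One small point of bookkeeping you handled correctly and should keep explicit: in the case $f(\be_n)>0$ the two counts of $2^{n-2}$ live in disjoint slices, so they add without double-counting.
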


Note that this is tight, for example by taking $f(x) = 1 - x_1$.
\medskip

\begin{proof}
We prove the statement by induction on $n$. Obviously it is true for $n=1$.
For $n>1$, we partition the cube $\{0,1\}^n$ as follows: let
$$ Q_i = \{ x \in \{0,1\}^n: x_1=\ldots=x_{i-1}=0, x_i=1 \}.$$
In other words, $Q_i$ is the set of points such that the first nonzero coordinate is $x_i$.
We have $\{0,1\}^n = \{0\} \cup \bigcup_{i=1}^{n} Q_i$. Now consider a submodular function
$f:\{0,1\}^n \rightarrow \RR$ such that $f(0) > 0$. We consider two cases.

If there is coordinate $i$ such that $f(\be_i) \leq 0$, then the discrete derivative
$\partial_i f(0)$ is negative. By submodularity, $\partial_i f$ must be negative everywhere.
Hence, for any point $x$ such that $x_i=0$, at least one of $f(x), f(x+\be_i)$ is nonzero.

The other case is that $f(\be_i) > 0$ for all $i \in [n]$. Then we apply the inductive
hypothesis to $Q_i$, which implies that at least $\frac12 |Q_i|$ values in $Q_i$ are nonzero.
By adding up the contributions from $Q_1,\ldots,Q_n$, we conclude that at least half
of all the values in $\{0,1\}^n$ are nonzero. 
\qed
\end{proof}

To rephrase the lemma, we can start with a zero function on $\{0,1\}^n$, increase
the value of $f(0)$ to a positive value, and ask - how many other values do we have to modify
to make the function submodular? The lemma says that at least $2^{n-1}$ values must be modified.
In fact, the condition of submodularity does not change under the addition of a linear function,
so the zero function can be replaced by any linear function.
Thus the lemma says that it is impossible to increase the value of a linear
function at the lowest point of a cube, without changing a lot of other values in the cube.

Note that it is possible to {\em decrease} the value of a linear function at the lowest point
of a cube and this does not create any violation of submodularity. What is impossible
is to decrease the value ``in the middle" of a linear function, without changing
a lot of other values. This is the content of the next lemma.

\begin{lemma}
\label{lem:decrease-value}
Suppose $n$ is even, $f:\{0,1\}^n \rightarrow \RR$ is a submodular function and $f(x) < 0$
for some $||x||_1 = n/2$. Then there are at least $2^{n/2}$ points $x \in \{0,1\}^n$
such that $f(x) \neq 0$.
\end{lemma}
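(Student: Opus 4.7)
My plan is to exploit a well-chosen family of submodular inequalities that couples three regions of the cube relative to $x$ and then do a counting argument. Let $I = \{i : x_i = 1\}$ and $J = \{i : x_i = 0\}$, so $|I| = |J| = n/2$. Every point of the hypercube factors uniquely as $x - A + B$ with $A \subseteq I$ and $B \subseteq J$; this identifies $D_x = \{y : y \leq x\}$ with $B = \emptyset$, $U_x = \{y : y \geq x\}$ with $A = \emptyset$, and the incomparable region $M_x = \{0,1\}^n \setminus (D_x \cup U_x)$ with those pairs for which both $A$ and $B$ are nonempty.

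The key step is to apply submodularity to the pair $(y, z) = (x - A + B,\, x)$. A direct coordinate-wise computation of meet and join shows $y \wedge z = x - A$ and $y \vee z = x + B$, and submodularity yields
$$f(x - A) + f(x + B) \leq f(x - A + B) + f(x).$$
Since $f(x) < 0$, whenever $A$ and $B$ are both nonempty, at least one of $f(x - A)$, $f(x + B)$, $f(x - A + B)$ is nonzero.

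For the count, let $d$ be the number of nonempty $A \subseteq I$ with $f(x - A) = 0$, and $u$ the analogous count on the $J$ side. Then $D_x$ contains $2^{n/2} - d$ nonzero values (counting $x$ itself), $U_x$ contains $2^{n/2} - u$, and the inequality above forces at least $du$ nonzero values in $M_x$ (distinct by injectivity of $(A,B) \mapsto x - A + B$). Summing, and subtracting $1$ for the double count of $x \in D_x \cap U_x$, the total is at least
$$(2^{n/2} - d) + (2^{n/2} - u) - 1 + du \;=\; 2^{n/2+1} - 2 + (d-1)(u-1).$$
A short case analysis on $d = 0$, $u = 0$, or $d, u \geq 1$ shows this is $\geq 2^{n/2}$ (when $d = 0$, one uses $u \leq 2^{n/2} - 1$; the other cases are immediate).

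The main obstacle, and the reason naive restriction arguments fail, is that the dual of Lemma \ref{lem:increase-value} for $f(0) < 0$ simply does not hold: the function $f(y) = -\mathbf{1}[y = \mathbf{0}]$ is submodular with only a single nonzero value. So one cannot separately work inside $D_x$ or $U_x$ and dualize; the coupling through $M_x$ is essential, and the hypothesis $\|x\|_1 = n/2$ is precisely what makes $D_x$ and $U_x$ both have size $2^{n/2}$ so that the bound balances out.
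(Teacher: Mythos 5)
Your proof is correct, and it takes a genuinely different route from the paper's. The paper restricts attention to the subcube $Q=\{y: y\le x\}$: if some $y\in Q$ has $f(y)=0$, it walks a monotone path from $y$ up to $x$, finds an edge $(y',y'+\be_i)$ of negative marginal value (since $f(y)=0>f(x)$), and then uses submodularity to propagate this negative marginal to the $2^{n/2}$ vertex-disjoint parallel edges $(z',z'+\be_i)$ with $z'\ge y'$ obtained by flipping the zero coordinates of $x$; each such edge contains a nonzero value. You instead apply the lattice form of submodularity, $f(x-A)+f(x+B)\le f(x-A+B)+f(x)$, directly to the pairs $(x-A+B,\,x)$, and run a counting argument over the down-set, up-set, and incomparable region: your meet/join computation, the injectivity of $(A,B)\mapsto x-A+B$, and the bound $(2^{n/2}-d)+(2^{n/2}-u)-1+du\ge 2^{n/2}$ (with the $d=0$ or $u=0$ cases handled via $u,d\le 2^{n/2}-1$, and the case $d,u\ge 1$ needing only $2^{n/2}\ge 2$) all check out. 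What your approach buys is symmetry between the regions above and below $x$, no reliance on the derivative/propagation picture, and in fact a stronger count (close to $2^{n/2+1}$ unless $d$ or $u$ is extreme), matching tightness exactly in the paper's example $f(y)=-1$ for $y\le x$. What the paper's approach buys is that the located nonzero values sit on parallel edges in a single direction above a point, a structure the authors explicitly reuse later (in the proof of the second part of Theorem~\ref{thm-tester}, where the argument must be run inside the restricted sets $Q_x^-$ rather than a full subcube); your global $D_x/U_x/M_x$ decomposition would not transfer as directly to that constrained setting. One cosmetic point: the paper's Definition~\ref{def:der} states submodularity via monotone marginals, so strictly speaking you are invoking the standard equivalence with the inequality $f(y\wedge z)+f(y\vee z)\le f(y)+f(z)$, which the paper itself also uses (e.g.\ in Lemma~\ref{lem:lattice}), so this is fine but worth a word.
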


This lemma is also tight, by taking $f(y) = -1$ whenever $y \leq x$ and $f(y) = 0$ otherwise.
\medskip

\begin{proof}
Consider $Q = \{ y \in \{0,1\}^n: y \leq x \}$; this is a cube of dimension $n/2$,
hence $|Q| = 2^{n/2}$. If $f(y) \neq 0$ for all $y \in Q$, we are done. Therefore,
assume that there is any point $y \in Q$ such that $f(y) = 0$.
Then consider a monotone path from $y$ to $x$; there must be an edge $(y',y'+\be_i)$
of negative marginal value. By submodularity, all edges $(z',z'+\be_i)$ for $z' \geq y'$
must have negative marginal value. There are at least $2^{n/2}$ such edges, since
all the $n/2$ zero bits in $x$ are also zero in $y'$ and can be increased arbitrarily
to obtain a point $z' \geq y'$. Each of these (disjoint) edges $(z',z'+\be_i)$
contains a point of nonzero value, and hence there are at least $2^{n/2}$ such points.
\qed
\end{proof}

Now we can complete the proof of Theorem~\ref{thm:1-vsq}.
\medskip

\begin{proof}[Theorem~\ref{thm:1-vsq}]
Consider the function $f: \{0,1\}^{n+2} \rightarrow \RR$ defined for a 1-point lattice
${\cal L} = \{x\}$, $||x||_1=n/2$. By Lemma~\ref{lem:L-vsq}, $f$ has exactly one
violated square.
Note that for each fixed $a,b \in \{0,1\}$, the function
$f(a,b,x)$ is linear as a function of $x$.

Suppose $f': \{0,1\}^{n+2} \rightarrow \RR$ is submodular (presumably close to $f$).
Since $f$ has a violated square $\{(0,0,x), (0,1,x), (1,0,x), (1,1,x) \}$,
$f'$ must differ from $f$ on at least one of these values. Fix $a,b \in \{0,1\}$ such that
$f'(a,b,x) \neq f(a,b,x)$ and consider 
the function $f'(a,b,x)-f(a,b,x)$ as a function of $x$.
Since $f$ is linear, $f'-f$ is again submodular as a function of $x$.
We have $(f'-f)(x) \neq 0$. If $(f'-f)(x) > 0$, we apply Lemma~\ref{lem:increase-value}
to the cube $\{y: y \geq x\}$; if $(f'-f)(x) < 0$, we apply Lemma~\ref{lem:decrease-value}.
In both cases, we conclude that there are at least $2^{n/2}$ values $x \in \{0,1\}^n$
such that $f'(x) \neq f(x)$. Therefore, $f$ is $2^{-n/2}$-far from submodular.
\qed
\end{proof}

\subsection{Boosting the example to increase distance}

Observe that in Theorem~\ref{thm:1-vsq}, the relationship between relative distance
and density of violated squares is quadratic: we have relative distance $\epsilon = 2^{-n/2}$
and density of violated squares $\simeq \epsilon^2 = 2^{-n}$.
In order to prove the second part of Theorem~\ref{thm-tester}, we need to consider a denser lattice.
Since the regions of linearity will be more complicated here, we need a more general
statement to argue about the number of values that must be fixed to make a function
submodular.

\begin{lemma}
\label{lem:downmon-increase}
Let $f: \{0,1\}^n \rightarrow \RR$ be submodular (non-increasing marginals)
on a down-monotone subset ${\cal D} \subset \{0,1\}^n$.
If $f(0) > 0$ then there are at least $\frac{1}{n+1} |{\cal D}|$ points $y \in {\cal D}$
such that $f(y) \neq 0$.
\end{lemma}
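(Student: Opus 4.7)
The plan is to induct on $n$, generalizing the case analysis from Lemma~\ref{lem:increase-value} to the down-monotone setting. The base case $n = 0$ is trivial: then ${\cal D} = \{0\}$ and $f(0) \neq 0$ alone supplies the required nonzero value. For the inductive step, I would split into two cases mirroring the original argument: either there exists some $i$ with $\be_i \in {\cal D}$ and $f(\be_i) \leq 0$, or $f(\be_i) > 0$ for every such $i$.

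In the first case, submodularity forces $\partial_i f(y) \leq \partial_i f(0) \leq -f(0) < 0$ on every edge in direction $i$ lying inside ${\cal D}$. Splitting ${\cal D} = {\cal D}_0 \cup {\cal D}_1$ by the $i$-th coordinate, down-monotonicity gives ${\cal D}_1 - \be_i \subseteq {\cal D}_0$, so the edges $(y-\be_i, y)$ for $y \in {\cal D}_1$ are disjoint and each contains a nonzero value. The total number of nonzero values therefore satisfies $N \geq |{\cal D}_1|$. Applying the inductive hypothesis to ${\cal D}_0$ (viewed as a down-monotone subset of $\{0,1\}^{n-1}$ on which $f$ remains submodular with $f(0) > 0$) gives a second bound $N \geq |{\cal D}_0|/n$. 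These combine through the elementary inequality $\max(b, a/n) \geq (a+b)/(n+1)$ with $a = |{\cal D}_0|$ and $b = |{\cal D}_1|$, which one checks by separating the regimes $b \geq a/n$ and $b < a/n$.

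In the second case, I would use the partition ${\cal D} = \{0\} \cup \bigcup_{i=1}^{n} (Q_i \cap {\cal D})$ with $Q_i = \{y : y_1 = \ldots = y_{i-1} = 0,\ y_i = 1\}$, exactly as in Lemma~\ref{lem:increase-value}. Down-monotonicity forces $Q_i \cap {\cal D} = \emptyset$ whenever $\be_i \notin {\cal D}$, and when $\be_i \in {\cal D}$ the set $Q_i \cap {\cal D}$ is a down-monotone subset of the $(n-i)$-dimensional subcube rooted at $\be_i$, on which $f$ is submodular with $f(\be_i) > 0$. The inductive hypothesis then yields at least $\frac{1}{n-i+1}|Q_i \cap {\cal D}| \geq \frac{1}{n+1}|Q_i \cap {\cal D}|$ nonzero values inside $Q_i \cap {\cal D}$; adding the nonzero value at $0$ and summing over $i$ produces $N \geq 1 + \frac{1}{n+1}(|{\cal D}| - 1) \geq \frac{|{\cal D}|}{n+1}$.

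The main subtlety sits in the first case. The matching argument along coordinate $i$ only reaches points of ${\cal D}_1$ and their down-shadows, leaving a potentially large part of ${\cal D}_0$ where the negativity of $\partial_i f$ gives no direct information. Recovering the sharp factor $1/(n+1)$, rather than the easier $1/2$ that the full-cube version of Lemma~\ref{lem:increase-value} supplied, forces one to feed ${\cal D}_0$ back into the induction and then combine the two incomparable lower bounds through the max-inequality above, which is precisely what balances the regimes $|{\cal D}_1| \leq |{\cal D}_0|/n$ and $|{\cal D}_1| > |{\cal D}_0|/n$.
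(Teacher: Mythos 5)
Your proof is correct, but it follows a genuinely different route from the paper's. The paper gives a short, non-inductive ``witness'' argument: if $f(y)=0$ for some $y\in{\cal D}$, pick a minimal $x\leq y$ with $f(x)\leq 0$; minimality (and $f(0)>0$) gives $f(x-\be_i)>0$ for any $i$ with $x_i=1$, so $\partial_i f(x-\be_i)<0$, and submodularity pushes this negative marginal up to $y$, forcing $f(y-\be_i)>0$; since each point of ${\cal D}$ can serve as a witness for at most $n$ upper neighbours (and for itself), the nonzero points are hit at most $n+1$ times each, giving the bound directly. You instead run an induction on the dimension that transplants the two-case structure of Lemma~\ref{lem:increase-value} to down-monotone sets: the negative-derivative case yields the disjoint-edge bound $N\geq|{\cal D}_1|$ plus the inductive bound $N\geq|{\cal D}_0|/n$ on the $x_i=0$ slice, combined via $\max(b,a/n)\geq(a+b)/(n+1)$, and the all-positive case uses the $Q_i$ decomposition with the inductive hypothesis in each rooted subcube. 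Both arguments are valid and give the same $1/(n+1)$ factor; the paper's is shorter and avoids any induction or case balancing, while yours makes transparent how the lemma generalizes Lemma~\ref{lem:increase-value} and exactly where the loss from $1/2$ to $1/(n+1)$ enters (the balancing inequality between the slice bounds), at the cost of having to verify that down-monotonicity and submodularity restrict correctly to the slices and rooted subcubes.
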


This is also tight - consider for example ${\cal D} = \{0, \be_1, \ldots, \be_n \}$
and $f(x) = 1 - ||x||_1$.
\medskip

\begin{proof}
Suppose $f(y) = 0$ for some $y \in {\cal D}$. Then let $x \leq y$ be minimal such that
$f(x) \leq 0$. Since $x$ is minimal (and cannot be $0$ because $f(0)>0$), for any $x_i=1$
we have $f(x-\be_i) > 0$. Hence $f(x) - f(x-\be_i) < 0$ and by submodularity
$f(y) - f(y-\be_i) < 0$. Since $f(y) = 0$, this implies that $f(y-\be_i)>0$.
In this case we call $y-\be_i$ a witness for $y$.

To summarize, for every $y \in {\cal D}$ we have either $f(y) \neq 0$ or $f(y - \be_i) \neq 0$
for some witness of $y$. Since every point can serve as a witness for at most $n$ other points,
the number of nonzero values must be at least $|{\cal D}| / (n+1)$.
\qed
\end{proof}

Now we are ready to prove the second part of Theorem~\ref{thm-tester}.

\medskip

\begin{proof}
We define ${\cal L} \subset \{0,1\}^n$ as follows:
\begin{compactitem}
\item Consider $n$ even and partition $[n]$ into pairs $\{2i-1,2i\}, 1 \leq i \leq n/2$.
\item Let ${\cal L} = \{ x \in \{0,1\}^n: \forall i;  x_{2i-1} = x_{2i} \}.$
\end{compactitem}
Obviously, this is a lattice, in fact it is isomorphic to a cube of dimension $n/2$.
The function $f: \{0,1\}^{n+2} \rightarrow \RR$ based on this lattice
has exactly $2^{n/2}$ violated squares, due to Lemma~\ref{lem:L-vsq}.
It remains to estimate the distance of $f$ from being submodular.

To that end, focus on the ``middle layer" of the lattice, ${\cal M} = \{x \in {\cal L}: ||x||_1=n/2\}$.
Such points have exactly a half of the pairs equal to $(0,0)$ and a half equal to $(1,1)$.
For each such point $x$, consider points $y \geq x$ such that $y$ still has the same number
of pairs equal to $(1,1)$ as $x$. Formally, let
$$ Q_x = \{ y \geq x: \forall i; y_{2i-1}=y_{2i}=1 \Rightarrow x_{2i-1}=x_{2i}=1 \}.$$
The reason for this definition is that for any point $y \in Q_x$, it is possible to trace it back to $x$
(by zeroing out all the pairs which are not equal to $(1,1)$, we obtain $x$). Hence the sets $Q_x$
are disjoint. The path from $y$ to $x$ is also the shortest possible path to any point of the lattice
(because it is necessary to modify all pair which are equal to $(1,0)$ or $(0,1)$). 
In other words, $d_{\cal L}(y) = ||x-y||$ for any $y \in Q_x$.
This implies that the function $f(a,b,y)$ for any fixed $a,b$ is linear as a function of
$y \in Q_x$.

Our final argument is that in order to make $f$ submodular, we would have to fix many values
in each set $Q_x$. Let us assume that $f'$ is submodular.
Since $f$ has a violated square $\{(0,0,x)$, $(0,1,x)$, $(1,0,x)$, $(1,1,x) \}$ for each
$x \in {\cal L}$,  $f'$ must be different from $f$ in at least one point in each such square.
More specifically, $f'$ must be larger than $f$ for one of the points $(0,1,x), (1,0,x)$
or $f'$ must be smaller than $f$ for one of the points $(0,0,x), (1,1,x)$.

Fix $a,b$ so that $f'(a,b,x)$ differs from $f(a,b,x)$ as above. Since $f$ is linear on $Q_x$,
we have $f'-f$ submodular on $Q_x$ and $(f'-f)(a,b,x) \neq 0$.
If $a \neq b$, we must have $(f'-f)(a,b,x) > 0$. Then applying
Lemma~\ref{lem:downmon-increase} to the set $Q_x - x$, we conclude that $f'-f$ must be
nonzero on at least $\frac{1}{n}|Q_x|$ points in $Q_x$. 

In the other case,
$a=b$, we have $(f'-f)(a,b,x) < 0$. Note that in this case $f$ is actually linear on all of $\{0,1\}^n$
and $f'-f$ is submodular everywhere. Then we use arguments similar to Lemma~\ref{lem:decrease-value}.
Let $Q_x^-$ be the set of points $y \leq x$ such that the set of $(0,0)$ pairs is the same
in $y$ and $x$. Again, $y \in Q_x^-$ can be traced back to $x$ and so these sets are disjoint.
From the proof of Lemma~\ref{lem:decrease-value}, we obtain that either $f(y) \neq 0$
for all $y \in Q_x^-$, or else there is an edge $(x-\be_i, x)$ of negative marginal value.
This implies that all edges above this edge have negative marginal value.
I.e., at least half of the points in $Q_x \cup (Q_x - \be_i)$ must have nonzero value.

Now let us count the size of $Q_x$. We have $n/4$ pairs of value $(0,0)$ which can be modified
and we have $3$ choices for each (we avoid $(1,1)$ for such pairs). Therefore,
$|Q_x| = 3^{n/4}$. The same holds for $Q_x^-$.

This holds for every lattice point in the middle layer $\cal M$. 
Therefore, each lattice point $x \in M$ contributes $\Omega(3^{n/4} / n)$ nonzero points
in $f'-f$. There are ${n/2 \choose n/4} = \Omega(2^{n/2} / n)$ points in $\cal M$.
We have to be careful about the last case where the nonzero points
are guaranteed to be in $Q_x \cup (Q_x - \be_i)$ rather than $Q_x$. Such points
could be potentially overcounted $n$ times, but we had a $1/2$-fraction of them
nonzero, so we still get $\Omega(3^{n/4} / n)$ nonzero points from each point in $\cal M$.
Overall, we get $\Omega(2^{n/2} 3^{n/4})$ nonzero points in $f'-f$. This means
that the distance of $f$ from being submodular is $\epsilon = \Omega(2^{-n/2} 3^{n/4})$.
A calculation reveals that this is $\epsilon \simeq \Omega(2^{-0.104 n})$, while the density
of violated squares is $2^{-n/2}  < \epsilon^{4.8}$.

Finally, it is easy to boost this example to larger value of $\epsilon$.
Supppose we want to construct an example for a given $n$ and
$\epsilon = 2^{-0.104 n'}$, $n'<n$ ($n'$ can even be a constant).
Assume for simplicity that $n = an'$ and $a$ is an integer. 
Then we start from an example on $n'$ coordinates where the distance is
$\epsilon = 2^{-0.104 n'}$ and density of violated squares is $2^{-n'/2}$.
We extend $f$ to dimension $n' = an$ so that it does not depend on the new coordinates.
There are no violated squares involving the new coordinates 
and hence the density of violated squares as well as relative distance remain unchanged.
\qed
\end{proof}

\section{Path certificates for submodular extension}
\label{sec:LP}

Given a partial function $f$, can we get a precise characterization of when $f$
is submodular-extendable? Using LP duality, we can give a combinatorial
condition that captures this condition. In this subsection, $f$ will be
some fixed partial function. We will set $\cD = \df(f)$ and $\cU = \cB \setminus \cD$.
Let us associate a variable
$x_S$ for every set $S$. If $S \in \cD$, then $x_S$ has value $f(S)$ (so this
is not really a variable, but it will be convenient to keep this notation).
For set $S$, $A^+(S)$ is the set $\{e = (S,S+i) | \ i \notin S\}$
and $A^-(S)$ is the set $\{e = (S-i,S) | \ i \in S\}$.
For edge $e=(S,S+i)$,
$\Gamma^+(e)$ is the set $\{e' = (S+j,S+i+j) | \ j \notin S\}$.
The set $\Gamma^-(e)$ is $\{e' = (S-j,S+i-j) | \ j \in S-i\}$.
If $f$ is extendable, then the following LP has a feasible solution.
\begin{eqnarray*}
\forall e, e' \in \Gamma^+(e), & x_e - x_{e'} \geq 0\\
\forall e = (S,S+i), & x_e - x_{S+i} + x_S \geq 0\\
& {\bf x} \geq {\bf 0}
\end{eqnarray*}
Using Farkas' lemma, if this is infeasible, then we can derive
a contradiction from these equations. So, we have dual variables
$y_{e,e'}, y_{e}$ associated with each equation, and the following
LP is feasible.
\begin{eqnarray*}
\forall e, & y_e + \sum_{e' \in \Gamma^+(e)} y_{e,e'} = \sum_{e' \in \Gamma^-(e)} y_{e',e}\\
\forall S \in \cU, & \sum_{e \in A^+(S)} y_e = \sum_{e \in A^-(S)} y_e \\
\forall e, e' \in \Gamma^+(e), & y_{e,e'} \geq 0 \\
& \sum_{S \in \cD} [\sum_{e \in A^-(S)} y_e - \sum_{e \in A^+(S)} y_e]f(S) < 0
\end{eqnarray*}
\begin{definition} \label{def-cert} Consider a set of directed paths {\bf P} 
consisting of cycles or paths with
endpoints in $\cD$. An edge is \emph{upward} if it is directed from the smaller
set to the larger, and \emph{downward} otherwise. 

Let $\bU$ be the multiset of upward edges
of $\bf P$ and $\bD$ be the multiset of downward edges (so we keep as many copies
of edge $e$ as occurrences in $\bf P$). Let $G$ be a bipartite graph 
on $\bU$ and $\bD$ (with links, instead of edges).
An edge $e \in \bU$ is linked
to $e' \in \bD$ if $e \preceq e'$. The set of paths $\bf P$ is \emph{matched} if
there is a perfect matching in $G$. 

The \emph{value} of a directed path $\cP$, $val(\cP)$, that
starts at $S \in \cD$ and ends at $S' \in \cD$ is $f(S') - f(S)$. Cycles have
value $0$. The value
of $\bf P$ is the sum of values of the paths in $\bf P$.
If $\bf P$ has negative value, then $\bP$ is referred to as a \emph{path
certificate}.
\end{definition}

\begin{lemma} \label{lem-matched} The partial function $f$ is not submodular-extendable iff
$f$ contains a path certificate.
\end{lemma}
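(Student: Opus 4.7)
The plan is to prove the two directions separately; the forward direction is combinatorial while the reverse direction goes through LP duality with the dual system already set up in the section.

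For ($\Leftarrow$), I will suppose $\bP$ is a path certificate with perfect matching $M$ in $G$, and, for contradiction, that $\tilde f$ is a submodular extension of $f$. Because every path's endpoints lie in $\cD$, each $val(\cP)$ is the same computed with $\tilde f$ as with $f$, and cycles contribute zero in both. Expanding $\tilde f(\text{end}) - \tilde f(\text{start})$ as a telescoping sum of discrete derivatives of $\tilde f$ along the edges of each path gives
\[
val(\bP) \;=\; \sum_{e \in \bU}\partial_e \tilde f \;-\; \sum_{e' \in \bD}\partial_{e'}\tilde f \;=\; \sum_{(e,e')\in M}\bigl(\partial_e\tilde f - \partial_{e'}\tilde f\bigr).
\]
Since every matched pair satisfies $e\preceq e'$, submodularity of $\tilde f$ gives $\partial_e\tilde f \ge \partial_{e'}\tilde f$, so $val(\bP)\ge 0$, contradicting $val(\bP)<0$.

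For ($\Rightarrow$), I will apply Farkas' lemma to the primal LP of this section. Non-extendability forces primal infeasibility, which yields a rational (hence, after scaling, integer) solution $(y_e, y_{e,e'})$ to the dual system with strictly negative dual objective. Setting $u_e = \sum_{e'\succ e} y_{e,e'}$ and $d_e = \sum_{e'\prec e} y_{e',e}$, the edge equation rearranges to $d_e - u_e = y_e$, which I will interpret as: the sought $\bP$ traverses $e$ upward $u_e$ times and downward $d_e$ times. The $\cU$-conservation equation becomes exactly the in-degree\,$=$\,out-degree condition at each vertex of $\cU$ in this directed multigraph, so a standard Eulerian-style decomposition produces a multiset $\bP$ of directed walks with endpoints in $\cD$ together with directed cycles realizing these traversal counts. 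By construction the multiplicities in $\bU$ and $\bD$ are $u_e$ and $d_e$, and the nonnegative integers $y_{e,e'}$ assemble into a perfect matching in $G$: the equality $\sum_{e'\succ e} y_{e,e'} = u_e$ saturates every copy of $e$ in $\bU$, and the symmetric identity saturates $\bD$. Finally, summing $val(\cP)$ over walks telescopes to $\sum_e y_e \,\partial_e f$; by $\cU$-conservation only $\cD$-terms survive, and the sum coincides with the negative dual objective $\sum_{S\in\cD}\bigl[\sum_{e \in A^-(S)} y_e - \sum_{e \in A^+(S)} y_e\bigr]f(S)$. Thus $\bP$ is a path certificate.

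I expect the main obstacle to be in the reverse direction, specifically the simultaneous extraction of the walks and of the matching from the same dual solution. The algebraic content is modest, but one must pin down the sign convention (here $y_e$ represents the downward imbalance $d_e - u_e$), verify that the Eulerian-style decomposition and the matching produced by $y_{e,e'}$ are mutually consistent, and track multiplicities carefully; a careless accounting of orientations or of the difference between strict and non-strict comparability $e\prec e'$ vs.\ $e\preceq e'$ could easily derail the proof.
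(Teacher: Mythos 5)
Your forward direction is correct and is essentially the paper's own argument: telescope each path's value through a hypothetical submodular extension, pair the edges of $\bU$ and $\bD$ via the matching, and use $\partial_e \geq \partial_{e'}$ for matched $e \preceq e'$ to contradict $val(\bP)<0$.

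The reverse direction, however, has a genuine gap at exactly the step you flag as the obstacle: the perfect matching between $\bU$ and $\bD$ is never constructed. The variables $y_{e,e'}$ exist only for $e' \in \Gamma^+(e)$, i.e.\ they join an edge to a parallel neighbor \emph{one coordinate away}; such a neighbor is in general traversed zero times by your walk system (it is merely an interior node of an auxiliary flow), so these variables cannot ``assemble into'' a bipartite matching between upward and downward copies of traversed edges. Your saturation claim $\sum_{e'} y_{e,e'} = u_e$ is just the definition of $u_e$; it says nothing about \emph{which} element of $\bD$ a given copy of $e\in\bU$ is matched to, nor why that partner is comparable to $e$. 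What is actually needed --- and what occupies most of the paper's proof --- is to view $\{y_{e,e'}\}$ as a flow in the graph $G$ whose vertices are hypercube edges (links from $e$ to $\Gamma^+(e)$), with $y_e$ the excess at terminal $e$, and to decompose that flow into paths of $G$ joining terminals whose $y$-values have opposite signs; each such $G$-path certifies the comparability of its two endpoint edges and supplies one matched pair. The multiplicities of these $G$-paths must then be made consistent with the multiplicities of edge traversals in the decomposition of $\{y_e\}$; the paper does this by constructing $\bP$ and the family $\bQ$ of $G$-paths simultaneously, maintaining routing invariants and counting copies (the steps ``$y_e \geq t$'' and ``$s < |y_{e'}|$''), and it must additionally \emph{iterate}: the matched set extracted may have positive value, in which case it is subtracted from the dual solution (whose objective stays negative) and the procedure is repeated on the smaller flow.

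There is also a concrete sign problem in your one-shot value computation. With your convention (traverse $e$ upward $u_e$ times and downward $d_e$ times, so $y_e = d_e - u_e$), the ends-minus-starts count at a point $S\in\cD$ is $\sum_{e \in A^+(S)} y_e - \sum_{e \in A^-(S)} y_e$, so the telescoped total value equals the \emph{negative} of the dual objective, i.e.\ it comes out positive; the orientation must be flipped (net \emph{upward} traversal $y_e$, matching the paper's invariant that upward edges of $\bP$ have $y_e>0$), and after flipping one must recheck that the matching produced by the $G$-flow pairs an upward edge with a downward edge \emph{above} it, as Definition~\ref{def-cert} requires. Without the matching construction and with the orientation as stated, the claim that the decomposition is a path certificate is unsubstantiated.
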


\begin{proof} Suppose $\bf P$ is a path certificate, but
$f$ can be extended to a submodular function $f'$. Let $\bU$ be the multiset of upward edges
in $\bf P$ and $\bD$ the multiset of downward edges. We have a perfect matching between $\bU$
and $\bD$. Consider a matched pair $(e,e')$. We have $e \preceq e'$. By the submodularity
of $f'$, $f'(e) \geq f'(e')$. Considering $e,e'$ as directed edges, we get $f(e) + f'(e) \geq 0$.
Summing over all matched pairs,
$\sum_{e \in {\bf P}}f'(e) \geq 0$. Consider a path ${\cal P} \in {\bf P}$.
Note that $val(\cP)$ is the same in $f$ and $f'$, since $f'$ extends $f$.
Considering $\cP$ as a multiset of directed edges, we have $val(\cP) = \sum_{e \in \cP} f'(e)$.
We get $\sum_{\cP \in \bP} val(\cP) \geq 0$. Contradiction.

Suppose $f$ cannot be extended to a submodular function. By Farkas' lemma,
the second LP is feasible. Consider the directed hypercube (abusing notation, call this graph $\cB$). The second
equality is a flow conservation constraint for all vertices in $\cU$. Hence,
we can think of the $y_e$'s as 
giving a flow in $\cB$,
where the terminals are $\cD$. 
Precisely, $y_e$ is the flow in $e$ from the lower end to the higher end.
The first constraint is a little 
stranger\footnote{By that we mean, somewhat different, and not an unknown
dwarf.}. Consider the graph $G$, where the vertices are \emph{edges}
of the hypercube, and there is a directed link from $e$
to every member of $\Gamma^+(e)$. This actually
gives $n$ disconnected graphs, each of which is a hypercube
in $n-1$ dimensions. Think of $y_{e,e'}$ as a flow 
in $G$. Note that this is always positive. We do not really
have a flow conservation condition, because of the extra $y_e$.
Add a extra terminal for every $e$ that is attached 
to the vertex $e \in G$. This is called the terminal $e \in G$.
Think of $y_e$ amount of flow being removed (if $y_e \geq 0$)
or injected (if $y_e < 0$) into $e$ from this terminal. Then, we have a legitimate
flow in $G$ represented by the $y_{e,e'}$'s.

Since the $y$ values are rational, we can assume that they are integral.
We will construct a path certificate through a flow decomposition process.
At an intermediate stage, we will maintain a set $\bP$ of directed paths in $\cB$ and
a list of matched pairs in $\bP$. 
For each matched pair, we have a directed path in $G$ from the smaller
edge to the larger (call this
set of paths $\bQ$). All these paths start and end at terminals
in their respective graphs.
We maintain the following invariants. Through every path in $\bP \cup \bQ$,
a single unit of flow can be simultaneously routed, in the
flow given by the $y$ values.
Furthermore, a directed edge $e$ in $\bP$ is upward iff $y_e > 0$.
Flow in any directed edge of $\bQ$ is always positive.
Suppose the current set of paths $\bP$ is not completely
matched.
We will describe a procedure that either increases the number
of matched pairs, or adds a new path to both $\bP$ and $\bQ$.
That means that the total flow that is routed through $\bP$ (and $\bQ$)
increases by one. Since the flow is finite, this process must terminate
and return a set of matched paths.

Suppose there is an unmatched edge $e \in \bP$ (wlog, we can
take it to be upward).
This means that $y_e$ is positive. Note that because $\bP$
can be considered as a multiset of edges, there could
be many copies of the upward edge $e$ in $\bP$. Suppose
there are $t$ copies, which means that $t$ paths in $\bP$
pass through $e$. Since we can route one unit of flow
in each of these path simultaneously, $y_e \geq t$.
Let us look at the situation in $G$. At most $t-1$ copies
of $e$ are matched, so there are at most $t-1$ paths
in $Q$ that end at the terminal $e \in G$ (since $y_e \geq 0$,
there is a net influx at terminal $e \in G$). Let us route a single unit
of flow through all paths in $\bQ$ (and remove
this flow). This must still leave at one unit of flow
going into $e$. So, we can route one unit of flow
from some $e'$ to $e$ along path $Q$. Note that because the flow
is always positive in $G$, $e' \succ e$. 

Note that $y_{e'} < 0$, because in $G$, the terminal $e'$
has a net outflow. Suppose there is an unmatched copy
of $e'$ in $\bP$ (it must be downward). Then we can match $e$ to this copy
of $e'$, and we are done. Suppose this is not the
case. Let $s$ be the number of copies of the downward
edge $e'$ in $\bP$ (all of these are matched). We argue that $s < |y_{e'}|$. 
Suppose, for the sake of contradiction,
that $|y_{e'}| = s$. Them, there are $s$ paths in $\bQ$ that
start at the terminal $e' \in G$. If we remove
all the flow paths corresponding to $\bQ$, then there
is no flow going out of $e'$. But, we were able to route
one unit of flow from $e'$ to $e$ along $Q$ \emph{after}
removing flow corresponding to $\bQ$. Contradiction.
Hence $|y_{e'}| > s$. This means that after removing
all the flow corresponding to $\bP$ (in $\cB$), there
is still at least one unit of (downward) flow left on $e'$.
So, after the removal, we can still route one unit of flow
through $e'$, giving us path or cycle $P$. We add $P$ to $\bP$
and $Q$ to $\bQ$, observing that the invariants are
maintained. This ends the procedure.

Finally, we end up with a set of matched paths $\bP$. If this
has negative value, we have found our certificate. Suppose
it has positive value. We argue that the we can find a new 
(integral) solution for the dual which has a smaller flow.
This is done by just removing one unit flow along
all paths in the final $\bP$ and $\bQ$. Consider some upward
edge in $\bP$. Since $\bP$ is completely matched,
the number of copies of $e$ in $\bP$ is exactly
the number of paths in $\bQ$ ending at terminal $e$ in $G$.
Hence, the $y$ values, after the decrease, will
maintain the flow conservation conditions. The original
value of the solution is negative, and we removed
a set of matched paths of positive value. So, the value
of the remaining solution is still negative. This gives
us the new solution for the dual.
\qed
\end{proof}

A path in $\bP$ is called a \emph{singleton} if it consists of only a single edge.
We will prove some ``clean-up" claims that provide us with nice path certificates.

\begin{claim} \label{clm-cancel} Let $f$ be a partial function.
Let $f$ contain a set of matched paths $\bP$
and let $e$ is an upward edge in $\bP$ that is matched
to a downward copy of itself.
There is an operation that converts $\bP$ to $\bP'$ such that $\bP'$
contains the same multiset of edges $\bP$ except for an upward and downward
copy of $e$. The matching of $\bP'$ is identical to $\bP$ (except for the
matched pair of $e$) and $val(\bP) = val(\bP')$.
\end{claim}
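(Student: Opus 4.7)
The plan is to perform \emph{surgery} on the paths of $\bP$ at the two copies of $e$: we cut each path where it uses $e$ and reconnect the pieces so that the two copies of $e$ disappear while all other edges are preserved. Since the value of a path is just $f(\text{end}) - f(\text{start})$ (and cycles contribute $0$), it suffices to check that the multiset of path endpoints, interpreted with signs, is preserved by the surgery.

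Write the matched copies of $e$ as $e_u = (S, S+i)$ traversed upward in some path $P_1 \in \bP$, and $e_d = (S+i, S)$ traversed downward in some path $P_2 \in \bP$. I distinguish two cases. First suppose $P_1 \neq P_2$. Write $P_1 = A \cdot e_u \cdot B$, where $A$ is the prefix ending at $S$ and $B$ is the suffix starting at $S+i$; similarly $P_2 = C \cdot e_d \cdot D$ with $C$ ending at $S+i$ and $D$ starting at $S$. Replace $P_1, P_2$ by the two reconnected paths $P_1' = A \cdot D$ and $P_2' = C \cdot B$; the concatenations are legal because the common endpoints match ($A$ and $D$ meet at $S$; $C$ and $B$ meet at $S+i$). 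A short computation shows $val(P_1') + val(P_2') = (f(\text{end of } D) - f(\text{start of } A)) + (f(\text{end of } B) - f(\text{start of } C)) = val(P_1) + val(P_2)$. In the second case, $P_1 = P_2 = P$ contains both $e_u$ and $e_d$. Up to reversing the orientation of the cycle (if $P$ is a cycle) or swapping the roles, we can write $P = A \cdot e_u \cdot B \cdot e_d \cdot C$ where $B$ is a closed walk from $S+i$ back to $S+i$; replace $P$ by $P' = A \cdot C$ together with the cycle $B$. Since cycles have value $0$, $val(P') + val(B) = f(\text{end of } C) - f(\text{start of } A) = val(P)$. The symmetric sub-case $P = A \cdot e_d \cdot B \cdot e_u \cdot C$ is handled identically, with $B$ now a closed walk at $S$.

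Let $\bP'$ be obtained from $\bP$ by this single local surgery. By construction the multiset of directed edges of $\bP'$ equals that of $\bP$ minus one upward and one downward copy of $e$, so $\bU' = \bU \setminus \{e\}$ and $\bD' = \bD \setminus \{e\}$ as multisets. The cases above show $val(\bP') = val(\bP)$.

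It remains to verify that $\bP'$ is still matched. Take the matching $M$ on $\bP$ and restrict it to the pairs not involving the deleted copies of $e$; since those two copies were matched to each other (by hypothesis), the restricted matching $M'$ is a perfect matching of $\bU'$ with $\bD'$, and every linked pair in $M'$ still satisfies $e \preceq e'$ because the underlying hypercube edges are unchanged. Hence $\bP'$ is a matched set of paths with the same value as $\bP$, completing the claim.

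I do not expect any essential obstacle here; the only technical point requiring care is the case split and the correct identification of path endpoints after cutting, which is what the two sub-cases above are designed to handle.
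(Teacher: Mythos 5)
Your proposal is correct and follows essentially the same route as the paper: cut each path at its copy of $e$ and cross-reconnect the pieces, so exactly one upward and one downward copy of $e$ disappear, the value is preserved because path values depend only on signed endpoints, and the remaining matching carries over unchanged. You additionally spell out the case where both copies of $e$ lie on the same path (yielding a path plus a zero-value cycle), which the paper's proof leaves implicit; the only remaining looseness, shared with the paper, is that when one of the two elements of $\bP$ is a cycle the reconnection should be read as producing a single path through the cycle's complement arc rather than two pieces based at an arbitrary basepoint.
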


\begin{proof} Let $e = (S,S+i)$. Suppose path
$\cP_u$ contains edge $e$ upwards, and $\cP_d$ contains it downwards. We can split $\cP_u$
into portions $\cP_{1,u}$ and $\cP_{2,u}$ such that the former is the part before $e$
and the latter is after $e$. Similarly, we can get $\cP_{1,d}$ and $\cP_{2,d}$. Note
that $\cP_{1,u}$ ends at $S$ and $\cP_{2,d}$ starts at $S$. Similarly, $\cP_{2,u}$ ends
at $S+i$ and $\cP_{1,d}$ starts at $S$. We can combine $\cP_{1,u}$ and $\cP_{2,d}$
to get a path $\cP'_1$. Similarly, we get $\cP'_2$. We replace $\cP_u$ and $\cP_d$
by he $\cP'_1$ and $\cP'_2$. Note that the sum of values
does not change. Also, the only edges removed are the upward and downward copies of $e$
and the matching on the remaining edges stays the same.
\qed
\end{proof}

\begin{claim} \label{clm-single} Let $f$ be partial function such that
for any square of $\cB$, at most $2$ points are present in $\df(f)$.
Let $f$
contain a path certificate $\bP$, such that no edge
occurs both upward and downward in $\bP$.
There exists a path certificate $\bQ$ 
such that $\bQ$ contains no singleton edge. Furthermore, no edge in $\bQ$ appears
both upward and downward.
\end{claim}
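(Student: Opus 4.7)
My plan is to iteratively eliminate singletons from $\bP$ through a local ``absorption'' operation. I would pick any singleton $\cP_1 = (S, S+\be_i)$; by symmetry I may assume it is oriented upward, and let $e' = (T+\be_i, T)$ be its matched downward partner in some path $\cP_2 = A_0 \to \cdots \to T+\be_i \to T \to \cdots \to A_m$. The relation $e \preceq e'$ forces $T \geq S$, and in fact $T \neq S$, for otherwise $e = e'$ would contradict the hypothesis that no edge of $\bP$ is both upward and downward. I would then choose a monotone chain $\sigma$ from $S$ up to $T$ in the coordinates of $T \setminus S$, with $\sigma'$ the parallel chain from $S+\be_i$ up to $T+\be_i$, and replace $\cP_1, \cP_2$ by
\[
\cP_{2a}' : A_0 \to \cdots \to T+\be_i \to (\sigma')^{-1} \to S+\be_i,
\qquad
\cP_{2b}' : S \to \sigma \to T \to \cdots \to A_m
\]
(joined into a single path when $\cP_2$ is a cycle). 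A short verification would show that all endpoints lie in $\df(f)$, the total value is preserved by telescoping, and the newly added upward edges of $\sigma$ pair off naturally with the downward edges of $(\sigma')^{-1}$ via the shift-by-$\be_i$ map (which respects $\preceq$), so the matched-certificate structure extends.

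The critical point of the plan is to show that this absorption creates no new singleton, and this is exactly where the hypothesis on $\df(f)$ enters. If $\cP_{2a}'$ had length one, then the prefix of $\cP_2$ up to $T+\be_i$ would be empty and $|\sigma'| = 1$, forcing $T = S+\be_j$ for some $j \neq i$; but then $S$ and $S+\be_i$ (from $\cP_1$) together with $T$ and $T+\be_i$ (from $\cP_2$) would all lie in $\df(f)$, placing four points of the square $\{S, S+\be_i, S+\be_j, S+\be_j+\be_i\}$ into $\df(f)$ and contradicting the two-points-per-square hypothesis. The argument for $\cP_{2b}'$ is symmetric.

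The main obstacle will be preserving the no-up-and-down property of the certificate: the freshly inserted edges of $\sigma$ or $(\sigma')^{-1}$ could coincide in direction with pre-existing edges of the certificate and produce a conflict on some edge $e^*$. When this happens, I would use that the matched partners of the two copies of $e^*$ satisfy $e^*_u \preceq e^* \preceq e^*_d$, so transitivity of $\preceq$ legalizes a re-pairing in which $e^*$ is matched to itself, after which Claim~\ref{clm-cancel} strips that up-and-down copy without changing the total value. I would then interleave absorption and cancellation, controlling the process by a potential function that first prioritizes the singleton count and then the total edge count, and argue via the same square-based case analysis that the cancellations arising in this setting cannot regenerate a singleton. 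Rigorously establishing termination, and confirming that cancellation cannot revive the singletons killed by absorption, is the main technical difficulty I anticipate.
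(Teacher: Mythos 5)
Your construction is exactly the paper's: remove the singleton $e=(S,S+\be_i)$ and its matched partner $e'=(T+\be_i,T)$, splice the path through $e'$ with two parallel monotone chains from $S$ to $T$ and from $T+\be_i$ down to $S+\be_i$, match the new chain edges to each other via the shift by $\be_i$ (legal since $\preceq$ is preserved), and observe that the values telescope so the certificate's value is unchanged. Your use of the two-points-per-square hypothesis to rule out new singletons is also the paper's argument, with one cosmetic slip: you claim all four points of the square $\{S,S+\be_i,S+\be_j,S+\be_i+\be_j\}$ would lie in $\df(f)$, but only three are justified ($S$ and $S+\be_i$ as endpoints of the singleton, and $T+\be_i$ as a terminal of the other path); since three already exceed two, the contradiction stands, and the symmetric case gives $S$, $S+\be_i$, $T$. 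The only place you diverge is the ``main obstacle'' you anticipate, namely that a freshly inserted chain edge might coincide, with opposite orientation, with an edge elsewhere in the certificate: the paper's proof does not treat this case at all --- it only notes that the two new chains are vertex-disjoint from each other and then asserts the no-up-and-down property, and its termination argument is simply that the singleton count drops by one per step. So your proposed interleaving of absorption with Claim~\ref{clm-cancel}, governed by a potential function, is machinery beyond what the paper contains, and it is precisely the part you concede is not worked out (termination, and that cancellation cannot recreate singletons); relative to the paper's own proof you are not missing any supplied ingredient, you are flagging a subtlety the paper glosses over.
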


\begin{proof} We will show how to remove any singleton in $\bP$ and give
an ``equivalent" certificate $\bQ$. The value will remain the same.
Suppose there is a singleton path consisting of upward edge $e$.
Some downward edge $e'$, $e' \succeq e$ must occur in path $\cP \in \bP$. 
If $e = e'$, then this edge occurs both upward and downward. This cannot happen.
So $e' \succ e$. Let $e = (S,S+i)$ and $e' = (T+i,T)$, for some $S \subset T$.
We will split $\cP$ into two paths. Let $\cP_1$ be the portion of $\cP$ before $e'$
and $\cP_2$ be the portion after $e$. Note that $\cP_1$ ends at $T+i$ and $\cP_2$
starts at $T$. Consider a downward path $\cQ_1$ from $T+i$ to $S+i$ 
and a parallel upward path $\cQ_2$ from $S$ to $T$. Observe that there is a
perfect matching between the edges of $\cQ_1$ to those of $\cQ_2$. 

Consider the path
$\cQ'_1$ formed by joining $\cP_1$ to $Q_1$, and the similarly constructed $\cQ'_2$.
Note that $\cQ'_1$ ends at $S+i$ and $\cQ'_2$ starts at $S$. 
To get $\bQ$, we remove the singleton $e$
from $\bP$ and replace $\cP$ by $\cP_1$ and $\cP_2$. The set $\bQ$ is completely
matched. The edges in $\cQ_1$ and $\cQ_2$ (matched to each other) are disjoint.
Hence, no edge in $\bQ$ appears both upward and downward.
The singleton edge $e$
starts at $S$ and ends at $S+i$. So $val(\cQ'_1) + val(\cQ'_2) = val(e) + val(\cP)$.
and $val(\bQ) = val(\bP)$.
Suppose $|\cQ_1| > 1$. Then neither of $\cQ'_1$ and $\cQ'_2$ are singletons.
Suppose $\cQ_1$ is a single edge. Then $e$ and $e'$ form a square, so
neither endpoint of $e$ can be in $\df(f)$. This means that the path
$\cP_1$ and $\cP_2$ are at least of length $1$ and $\cQ'_1$ and $\cQ'_2$
are at least of length $2$. The total number
of singletons has decreased by $1$.
We can repeatedly apply this procedure, and remove
all singletons. 
\qed
\end{proof}

\subsection{Large minimal certificates}

This will require many steps. 
We will start by giving a construction of a long cycle in $\cB$ 
with some special properties. This cycle will be a sort of ``frame"
on which we can define $f$. For this $f$, we will find a set
of matched path of negative value, showing that $f$ is non-extendable.

The simple cycle will be obtained by performing a series of moves
in $\cB$. An \emph{upward} (resp. \emph{downward}) step is one where some coordinates
is incremented (resp. \emph{decremented}).
We will assume that $n = 2m+4$. 
The cycle will only involve points in the $m+1, m+2, m+3, m+4$
levels of $B$. We will call these levels the $1,2,3,4$ levels.
Any point is represented
as $(b_1,b_2,b_3,b_4,S,T)$, where $b_i$'s are bits, and $S$ and $T$
are sets on $m$ elements. We will denote the starting (and hence, ending) point
of the cycle to be $(0,0,1,0,\emptyset,[m])$, where $[m]$ represents the complete
set on $m$ elements.
The cycle $\cC$ has the following properties:
\begin{itemize}
	\item The cycle is simple, i.e., does not intersect itself.
	\item The cycle can be divided into a sequence of contiguous \emph{chunks} of three steps.
Every odd (resp. even) chunk has three upward (resp. downard) steps. There are an
even number of chunks.
	\item The cycle has $M \geq 2^{m}$ chunks.
	\item Let the $i$th chunk is denoted by $K_i$. The second edge $e$ of $K_{i}$
is parallel to the first edge $e'$ of $K_{i+1(\textrm{mod} \ M)}$. Suppose $i$ is odd.
Then $K_{i}$ has upward steps, and hence $e' \succ e$. Similarly,
if $i$ is even, $e' \prec e$.
\end{itemize}

A crucial combinatorial property of the hypercube that we use is the
existence of Hamiltonian circuits. We set $\cH$ to be a (directed) Hamiltonian
circuit on the $m$-dimensional hypercube. For any set $R \in \cH$,
$s(R)$ denotes the successor of $R$ in $\cH$. The \emph{complement path}
$\overline{\cH}$ is the Hamiltonian circuit obtained by taking
the set-complement of every point in $\cH$.

\begin{lemma} \label{lem-cycle} There exists a cycle $\cC$ with the properties
above.
\end{lemma}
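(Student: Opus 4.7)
I would construct $\cC$ explicitly by concatenating $2^m$ nearly identical \emph{periods}, each consisting of a three-step upward chunk followed by a three-step downward chunk, so the total number of chunks is $M=2^{m+1}$ (even, and clearly at least $2^m$). Fix a Hamiltonian cycle $R_0,R_1,\ldots,R_{2^m-1},R_0$ on $\{0,1\}^m$ with $R_0=\emptyset$ and $R_i=R_{i-1}\triangle\{g_i\}$; write $\sigma_i=+1$ if $g_i\notin R_{i-1}$ and $\sigma_i=-1$ otherwise. Period $i$ begins at the level-$(m{+}1)$ state $(0,0,1,0,R_{i-1},\overline{R_{i-1}})$ and performs the six-step template
\[
\underbrace{+b_1,\ +b_2,\ +g_S\ (\text{or }+g_T \text{ if } \sigma_i=-1)}_{K_{2i-1}},\quad \underbrace{-b_2,\ -b_1,\ -g_T\ (\text{or }-g_S)}_{K_{2i}},
\]
ending at $(0,0,1,0,R_i,\overline{R_i})$ and touching only levels $m{+}1,\ldots,m{+}4$. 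After $2^m$ periods the cycle returns to $(0,0,1,0,\emptyset,[m])$.

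\textbf{Parallel condition.} Within any single period the middle edge of the up-chunk is the $b_2$-edge and the first edge of the down-chunk is also a $b_2$-edge, so they are parallel; the relation $e'\succ e$ holds because the only intervening move is the up-chunk's third edge, which raises one coordinate of $S$ or of $T$. Between two consecutive periods the middle edge of the down-chunk is always the $b_1$-edge, parallel to the $+b_1$ first edge of the next up-chunk; here $e'\prec e$ follows because the intervening move (the down-chunk's third edge) lowers one coordinate of $S$ or of $T$. The wraparound from $K_M$ back to $K_1$ obeys the same pattern.

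\textbf{Simplicity.} Each visited state is determined by its $b$-pattern together with the pair $(S,T)\in\{0,1\}^m\times\{0,1\}^m$. Inside a period only three distinct $(S,T)$-values appear, and within each of these three groups the $b$-patterns used are pairwise distinct. Across periods, the ``anchor'' states $(0,0,1,0,R_{i-1},\overline{R_{i-1}})$ determine $i$ uniquely by the Hamiltonicity of $\cH$. The only remaining case is a possible collision between the intermediate $(R_j,\overline{R_{j-1}})$ of a $\sigma=+1$ period and the intermediate $(R_{k-1},\overline{R_k})$ of a $\sigma=-1$ period; such a collision would force $R_j=R_{k-1}$ and $R_{j-1}=R_k$ simultaneously, and running both equalities through the Gray code yields a repeated vertex, which is impossible.

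\textbf{Main obstacle.} The real work lies in choosing a single six-step template that satisfies the parallel-plus-$\succ$/$\prec$ conditions simultaneously at the within-period (odd, even) boundary, at the between-period (even, odd) boundary, and at the wraparound. My template dedicates $b_2$ to the within-period splice, $b_1$ to the between-period splice, and the last step of each chunk to advancing (or retreating along) the Hamiltonian cycle; the extra bit $b_3$ is only needed to lift the entire cycle into levels $m{+}1,\ldots,m{+}4$. Once the template is fixed, the four required properties reduce to routine case-checking.
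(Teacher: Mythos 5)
Your construction is correct and does prove Lemma~\ref{lem-cycle} as stated, and its skeleton is the paper's: a Hamiltonian circuit on the $m$-cube carried together with its complement, so that each Gray-code step can always be taken upward in one of the two blocks and downward in the other, with auxiliary bits supplying the remaining moves of each chunk. The difference is your template: one Gray-code step per six-step period (two chunks), using only $b_1,b_2$ and keeping $b_3=1$, $b_4=0$ frozen, versus the paper's twelve-step period (four chunks, two Gray-code steps) that cycles through the four boundary patterns $(0,0,1,0)$, $(1,1,1,0)$, $(1,0,0,0)$, $(1,0,1,1)$. Your parallelism checks and your simplicity argument go through (the one nontrivial collision case you isolate is excluded because a Hamiltonian cycle on $\{0,1\}^m$, $m\ge 2$, cannot traverse an edge in both directions), so as a proof of the lemma your leaner cycle is fine and arguably cleaner. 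What the paper's seemingly redundant four-pattern period buys is not the lemma but its downstream use: the function $f$ is defined on the endpoints of the $\cP_i$'s, and the later argument needs $\df(f)$ to be ``spread out'' --- each defined point has at most one defined neighbor, every square contains at most two defined points, and each level-1 point has a unique level-3 partner at Hamming distance $2$. With your template these fail: the level-3 point $(1,1,1,0,R_{i-1},\overline{R_{i-1}})$ is adjacent to the level-4 endpoints of both period $i$ and period $i-1$, so a defined point can have two defined neighbors and a single square can contain three defined points, which breaks the hypothesis of Claim~\ref{clm-single} and the uniqueness argument of Claim~\ref{clm-unique}. The alternation between the $B_1/C_1$ and $B_2/C_2$ patterns on consecutive Gray-code steps is exactly what kills these extra adjacencies. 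So your proof is a valid and more economical proof of the lemma as literally stated, but it is not a drop-in replacement for the paper's cycle in the proof of Theorem~\ref{thm-min}; to serve that role you would need to reintroduce some alternation of boundary patterns (essentially recovering the paper's period of four chunks).
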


\begin{proof} Starting from a point $(0,0,1,0,R,\overline{R})$,
we will give a sequence of $4$ chunks that will end at $(0,0,1,0,s(s(R)),\overline{s(s(R))})$.
Since $\cH$ is a Hamiltonian circuit, we get a cycle.
The reason we keep $R$ and $\overline{R}$ is that from $(\cdots,R,\overline{R})$,
we can perform a single upward and then downward step
to reach $(\cdots,s(R),\overline{s(R)})$.
We will assume that the moves to both $s(R)$ and $s(s(R))$
are upward.
Whenever this is not the case, we can
just reverse the roles of $R$ (or $s(R)$) and $\overline{R}$ (or $\overline{s(R)}$).

We describe the sequence of chunks. In the arrows below, the labels above them
represents the coordinate being changed. The numbers $1,2,3,4$ represent
the first four coordinates. If the label has a set, then that set
is being changed by moving along (appropriately) either $\cH$ or $\overline{\cH}$.
These labels help verify the matching property. The first
and third chunks only have upward steps, and the remaining
have only downward steps. For convenience, $S = s(R)$ and $T = s(S)$.

\begin{enumerate}
	\item $(0,0,1,0,R,\overline{R}) \stackrel{1}{\rightarrow} (1,0,1,0,R,\overline{R}) \stackrel{2}{\rightarrow} (1,1,1,0,R,\overline{R}) \stackrel{R}{\rightarrow} (1,1,1,0,S,\overline{R})$.
	\item $(1,1,1,0,S,\overline{R}) \stackrel{2}{\rightarrow} (1,0,1,0,S,\overline{R}) \stackrel{3}{\rightarrow} (1,0,0,0,S,\overline{R}) \stackrel{\overline{R}}{\rightarrow} (1,0,0,0,S,\overline{S})$.
	\item $(1,0,0,0,S,\overline{S}) \stackrel{3}{\rightarrow} (1,0,1,0,S,\overline{S}) \stackrel{4}{\rightarrow} (1,0,1,1,S,\overline{S}) \stackrel{S}{\rightarrow} (1,0,1,1,T,\overline{S})$.
	\item $(1,0,1,1,T,\overline{S}) \stackrel{4}{\rightarrow} (1,0,1,0,T,\overline{S}) \stackrel{1}{\rightarrow} (0,0,1,0,T,\overline{S}) \stackrel{\overline{S}}{\rightarrow} (0,0,1,0,T,\overline{T})$.
\end{enumerate}

It is easy to see that no point can
occur in two different chunks, because the sets on $\cH$ or $\overline{\cH}$ are different.
So, the cycle is simple. The number of chunks is at least the number of
points in the $m$-dimensional hypercube. The matching property should be clear.
\qed
\end{proof}

We now define the function $f$. 
Let the directed path consisting of the first two edges of chunk $K_i$
be $\cP_i$. Note that $\cP_{2i}$ is downward and $\cP_{2i+1}$ is upward.
We describe the function $f$ and state many properties of $\df(f)$. 
It will be convenient to have define the following sequences of $4$ bits.
We set $B_1 = (0,0,1,0)$, $B_2 = (1,0,0,0)$, $C_1 = (1,1,1,0)$,
and $C_2 = (1,0,1,1)$. We use $A$ to denote any one of these.

\begin{itemize}
	\item The function $f$ will be defined on 
all the endpoints of the $\cP_i$'s. 
	\item For $\cP_1$, the small endpoint has value $v$ (the exact choice for this is immaterial),
and the larger endpoint has value $v+1$. For $\cP_{2i+1}$ ($i > 0$),
the small end has value $v$ and the large end has value $v+2$.
For $\cP_{2i}$ ($\forall i$), the large end has value $v+2$
and the small end has value $v$.
	\item Fix any $R$. One and only one point of the form $(B_j,R,\overline{R})$ is present in $\df(f)$.
	Similarly, one and only one of $(C_j,R,\overline{R})$ is present in $\df(f)$.
	We also have $(B_j,R,\overline{R}) \in \df(f)$ iff $(C_j,R,\overline{R}) \in \df(f)$.
	No other point is present in levels $1$ and $3$.
	
	\item Fix any $R$. Suppose $s(R) \supset R$.
	One and only one of $(B_j,s(R),\overline{R})$ is present in $R$. 
	Similarly, one and only one of $(C_j,s(R),\overline{R})$ is present in $R$. 
	We also have $(B_j,s(R),\overline{R}) \in \df(f)$ iff $(C_j,s(R),\overline{R}) \in \df(f)$.
	No other point is present in levels $2$ and $4$.
	
	Suppose $s(R) \subset R$. Then these points are of the form $(A,R,\overline{s(R)})$.
	
	\item Pairs of neighbors in $\df(f)$ are either level $1$-level $2$ pairs, or
	level $3$-level $4$ pairs. They are always of the following form:
	$(A,R,\overline{R}) \rightarrow (A,s(R),\overline{R})$ (if $R \subset s(R)$) or
	$(A,R,\overline{R}) \rightarrow (A,R,\overline{s(R)})$ (if $R \supset s(R)$).
	
	\item For any point of $\df(f)$, there is at most one neighbor present in $\df(f)$.
	Hence, any square of $\cB$ contains at most $2$ points of $\df(f)$.
	
	\item Consider some point $(B_j,R,\overline{R})$ in level $1$. The
	only point in level $3$ at a Hamming distance $2$ from this point is 
	$(C_j,R,\overline{R})$. A similar statement holds for points in level $2$.
\end{itemize}

\begin{claim} \label{clm-f} The function $f$ is not submodular-extendable.
\end{claim}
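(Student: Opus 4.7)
The plan is to invoke Lemma~\ref{lem-matched} by building an explicit path certificate directly from the two-edge subpaths $\cP_1,\dots,\cP_M$ of the cycle $\cC$. I will take $\bP=\{\cP_1,\dots,\cP_M\}$; each $\cP_i$ has both endpoints in $\df(f)$ by construction, and since the chunks alternate direction, each $\cP_{2i+1}$ is a two-edge directed upward path while each $\cP_{2i}$ is a two-edge directed downward path, so each is a legitimate directed path in the sense of Definition~\ref{def-cert}.

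First, I will compute $val(\bP)$. From the prescribed values of $f$, $val(\cP_1)=+1$, $val(\cP_{2i+1})=+2$ for every $i\geq 1$, and $val(\cP_{2i})=-2$. There are $M/2$ odd-indexed chunks (of which only $K_1$ is special) and $M/2$ even-indexed chunks, so summing gives $val(\bP) = 1 + 2(M/2-1) - 2(M/2) = -1 < 0$.

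Second, I will exhibit a perfect matching between $\bU$ and $\bD$. The multiset $\bU$ consists of the first two edges of every odd chunk, and $\bD$ of the first two edges of every even chunk, so $|\bU|=|\bD|=M$. I pair the second edge of $K_i$ with the first edge of $K_{i+1 \bmod M}$ for each $i$. Because consecutive chunks have opposite direction, each such pair is one upward edge matched with one downward edge; because each chunk contributes one ``forward'' match (its second edge to $K_{i+1}$) and one ``backward'' match (its first edge from $K_{i-1}$), every edge of $\bU\cup\bD$ is matched exactly once. The required comparability $e \preceq e'$ inside each pair is precisely the fourth property of $\cC$ in Lemma~\ref{lem-cycle}: the second edge of $K_i$ and the first edge of $K_{i+1}$ are parallel, with the upward edge lying below the downward one regardless of the parity of $i$.

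Combining $val(\bP) = -1 < 0$ with this matching, Lemma~\ref{lem-matched} yields the claim. I expect the main subtlety to be confirming $e \preceq e'$ simultaneously for every matched pair, but this is exactly what the cycle $\cC$ was engineered to deliver, so the whole argument reduces to bookkeeping against the four enumerated properties of $\cC$.
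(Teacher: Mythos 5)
Your proposal is correct and follows essentially the same route as the paper: take $\bP$ to be the set of all $\cP_i$'s, obtain the perfect matching between upward and downward edges from the fourth property of the cycle $\cC$ (second edge of $K_i$ matched to first edge of $K_{i+1 \bmod M}$), and compute $val(\bP) = 1 + 2(M/2-1) - 2(M/2) = -1 < 0$, so Lemma~\ref{lem-matched} applies. Your write-up simply makes explicit the matching and the value bookkeeping that the paper states tersely.
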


\begin{proof} By Lemma~\ref{lem-matched}, it suffices to show a path
certificate. As the astute reader might have guessed, all the $\cP_i$'s form
such a set. A matching exists because of the fourth property of the cycle $\cC$.
The value of $P_1$ is $1$. The value of any other $P_{2i+1}$ is $2$. Every
$P_{2i}$ has value $-2$. Since the total number of chunks is even,
the value of this set of paths is $-1$.
\qed
\end{proof}

We will now show that $f|_{S}$ for any $S \subset \df(f)$ is extendable. It will
be easiest to show that by proving that any path certificate for $f$
must essentially be the $P_i$'s. 

\begin{claim} \label{clm-unique} Suppose $f$ contains a set of matched paths $\bP$ with no singletons.
This $\bP$ must be the set of all $\cP_i$'s.
\end{claim}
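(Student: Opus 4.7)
I will show that any matched $\bP$ without singletons must equal, as a multiset of paths, $\{\cP_1,\ldots,\cP_M\}$. The crucial specifications of $\df(f)$ I will use are: (i) every $u\in\df(f)$ has at most one neighbor in $\df(f)$, so the only length-$1$ paths in $f$ are the ``gap'' edges of $\cC$ (i.e.\ singletons); (ii) for any level-$1$ point $(B_j,R,\overline{R})\in\df(f)$, the unique level-$3$ point of $\df(f)$ at Hamming distance $2$ is $(C_j,R,\overline{R})$, with a symmetric statement between levels $2$ and $4$; and (iii) the matching property of $\cC$, namely that the second edge of each chunk $K_i$ is parallel to the first edge of $K_{i+1}$ and comparable in the right direction.

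First I will show that every $\cP\in\bP$ has length exactly $2$. Length $1$ is excluded by (i) because such paths are singletons. For length $\geq 3$, the endpoints of $\cP$ lie in the narrow slab $\df(f)\subseteq\{x:m+1\leq\|x\|_1\leq m+4\}$, so $\cP$ must ``turn around'' within four levels. I will track the $b$-bit configuration along $\cP$ and use the matching constraint globally: every extra upward edge needs a parallel downward dominator somewhere in $\bP$, and chasing these dominators through the thin level-slab while respecting (i) shows that the only way to consistently pair them is via a length-$2$ step of $\cP_i$ type. Internal $\df(f)$-points are handled because such a point's only $\df(f)$-neighbor is its gap partner, and using the gap edge internally would either force splitting into a singleton or leave an unmatchable edge.

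Next, I classify length-$2$ paths $u\to w\to v$ in $\bP$ with $w\notin\df(f)$ (otherwise splitting at $w$ produces two forbidden singletons). By (ii) and its symmetric version, $u$ and $v$ must be the two endpoints of some $\cP_i$. There are exactly two candidate middle points $w$, corresponding to the two orderings of the two $b$-bit flips between $B_j$ and $C_j$; the matching property (iii) selects the one prescribed by $\cC$, because only for that middle does the second edge of the path admit a parallel dominating downward edge realizable in a globally matched $\bP$.

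Finally, a multiplicity argument pins each $\cP_i$ to multiplicity exactly one. If some $\cP_i$ appeared twice, each of its two edges would need two distinct matching partners; by (iii) these partners are unique and sit in $\cP_{i\pm 1}$, so doubling propagates around the cyclic index and forces every $\cP_j$ to have multiplicity $\geq 2$. Then each $u\in\df(f)$ becomes an endpoint of at least two length-$2$ paths in $\bP$, contradicting Step~2, which yields exactly one length-$2$ candidate from each $u$. The main obstacle is Step~1: a long path can in principle meander through non-$\df(f)$ vertices, and ruling it out requires the global matching bookkeeping above, combined with the sparse layering of $\df(f)$ in levels $m+1,\ldots,m+4$.
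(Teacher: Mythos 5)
Your skeleton matches the paper's (reduce every path in $\bP$ to a two-edge path whose endpoints are endpoints of some $\cP_i$, then use the parallel-edge matching to pin down the middle vertex and propagate around the cycle), but the step where all the real work lies -- your Step~1 -- is asserted rather than proved, and the assertion skips the two ideas that make it true. First, you never establish that $\bP$ is confined to the levels $m+1,\dots,m+4$: knowing only that the \emph{endpoints} lie in that slab does not force a long path to ``turn around within four levels'' -- it could climb far above level $m+4$ (or dip below $m+1$) and come back. The paper closes this by looking at the lowest point of all of $\bP$: a downward edge incident to it would need a matched parallel upward edge strictly below it, which is impossible, so the lowest point has only upward edges, hence is the starting terminal of a path and lies in $\df(f)$; symmetrically for the highest point. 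Note that this argument also disposes of cycles, which Definition~\ref{def-cert} explicitly allows and which your endpoint-based reasoning never addresses (a cycle has no endpoints in $\df(f)$ at all). Second, even granting the confinement, the passage to ``every path has exactly two edges'' is a counting argument, not a local chase: low edges (levels $1$--$2$) must be upward and high edges (levels $3$--$4$) downward, the perfect matching forces the number of middle edges to equal the number of low plus high edges, no-singletons gives at least one middle edge per path, and the level structure gives at most one low/high edge per path; equality then forces exactly one of each per path. Your ``chasing these dominators through the thin level-slab'' does not substitute for either ingredient, and since everything downstream (your Steps~2 and~3) presupposes the two-edge structure, this is a genuine gap rather than a compression.

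Two smaller points. In Step~2, discarding the case $w\in\df(f)$ ``because splitting at $w$ produces two singletons'' is not valid: nothing obliges you to split a two-edge path at its middle vertex, so this case must be handled (in the paper it simply never needs separate treatment, since the endpoints are identified via the Hamming-distance-$2$ property regardless of the status of the middle vertex). And your Step~3 tries to prove multiplicity exactly one, which is both unnecessary -- the proof of Theorem~\ref{thm-min} only needs that every $\cP_i$ occurs in $\bP$, which is what the paper's propagation $\cP_i\Rightarrow\cP_{i+1}$ around the cycle delivers -- and not actually forced: the multiset consisting of two copies of every $\cP_i$ is still a matched, singleton-free family, so the contradiction you draw from ``Step~2 yields exactly one candidate from each $u$'' does not follow, as Step~2 bounds shapes of paths, not multiplicities.
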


\begin{proof} Consider a point $X$ in $\bP$ that lies in the lowest level (the number
of $1$s in the representation of the point is minimized). We argue that 
this point only has upward edges incident to it. If there is a downward edge 
$e$ incident to it, then $\bP$ must contain an upward edge $e'$ that is matched
to $e$. Therefore, $e' \prec e$ and the lower end of $e'$ must lie in a lower
level than $S$. This contradicts the choice of $S$. Hence, $X$ only
has upward edges incident to it. This means that it can never be in the
interior of a path, and must be a terminal. Therefore, $X \in \df(f)$.
Similarly, points in $\bP$ that lie in the highest level only
have downward edges incident to them, and are also in $\df(f)$.

The points of $\df(f)$ lie in levels $m+1,m+2,m+3,m+4$, called
the $1,2,3,4$ levels. 
Edges between the $1$ and $2$ levels are called \emph{low edges},
those between the $2$ and $3$ levels are \emph{middle edges},
and those between the $3$ and $4$ levels are \emph{high edges}.
All edges of $\bP$ fall into one of these three sets.
Low edges are always upward and high edges are always downward.
Middle edges are matched to either low or high edges.
Therefore, the number of middle edges is exactly the same
as the total number of low and high edges.
Since $\bP$ contains no singletons, every path must
contain at least one middle edge. The total number
of low and high edges in a path is at most $1$.
This implies that \emph{every} path in $\bP$ has
exactly two edges and has one of the two forms:
an upward low and middle edge, or a downward top
and bottom edge. The former paths go from level $1$
to level $3$ and the latter from level $4$ to level $2$.
We must have at least one path of each type to
get both upward and downward edges. Therefore there
is some level $1$ point of $\df(f)$ in $\bP$.

Consider some point $X = (0,0,1,0,R,\overline{R})$ at level $1$ that is a terminal
in $\bP$. Let path $\cQ \in \bP$ start from here.
Note that this is the endpoint for some $\cP_i$, which
is $(0,0,1,0,R,\overline{R}) \rightarrow$ $(1,0,1,0,R,\overline{R}) \rightarrow$
$(1,1,1,0,R,\overline{R})$.
The certificate $\bP$ has an upward path of length $2$
from $X$. The properties of $\df(f)$ tells us that the other end of $\cQ$ can
only be $(1,1,1,0,R,\overline{R})$. It does not
immediately follow that $\cQ$ is $\cP_i$, since
there are two different paths between these points (the endpoints
differ in coordinates $1$ and $2$).
But observe that the second edge of $\cQ$ must be
matched by an downward edge between levels $4$ and $3$.
This edge has an endpoint in level $4$ that must be 
a neighbor of $(1,1,1,0,R,\overline{R})$. By the
properties of $\df(f)$, this point must
be $(1,1,1,0,s(R),\overline{R})$ (assuming $s(R) \supset R$). 
All downward paths of length $2$ from this point
end at $(1,0,0,0,s(R),\overline{R})$. The path changes in coordinates
$2$ and $3$. Since the second edge of $\cQ$ is matched
to the first edge of this path, both of these edges must
be along coordinate $2$. Hence, $\cQ$ is $\cP_{i}$, and
$\cP_{i+1(\textrm{mod} M)}$ also lies in $\bP$. Repeating the argument,
we get that all $\cP_i$'s lie in $\bP$. This completes
the proof.
\qed
\end{proof}

\begin{proof} (Theorem~\ref{thm-min}) By Claim~\ref{clm-f}, the function
$f$ is not submodular-extendable. For some subset $\cA \subset \df(f)$,
suppose $f|_\cA$ is not submodular-extendable. Since $\df(f)$ contains
no squares, by Claim~\ref{clm-single}, there is a path certificate $\bP$
in $\df(f|_\cA)$ that contains no singletons. Note that $\bP$
is also a path certificate for $f$. By Claim~\ref{clm-unique}, $\bP$
contains all $\cP_i$s. But that means that $\bP$ contains all points
in $\df(f)$. Contradiction.
\qed
\end{proof}

\section{From monotonicity to submodularity}
\label{sec:reduction}

In this section, we show a simple reduction from testing monotonicity to testing submodularity.

\begin{lemma} \label{lem-red} Given $f:\{0,1\}^n \rightarrow \RR$, there
exists a function $g:\{0,1\}^{n+1} \rightarrow \RR$ with the following
properties:
\begin{compactitem}
	\item If $f$ is monotonically non-increasing, then $g$ is submodular.
	\item If $f$ is $\epsilon$-far from being monotonically non-increasing, then $g$ is $\epsilon/2$-far	from being submodular.
	\item The value $g(x)$ can be computed by looking at $2$ values of $f$.
\end{compactitem}
\end{lemma}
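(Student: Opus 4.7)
The plan is to introduce one auxiliary coordinate (call it the $0$-th) and to take
$$
g(b,x) \;=\; b\cdot f(x)\;-\;M\binom{|x|}{2},\qquad b\in\{0,1\},\ x\in\{0,1\}^n,
$$
where $|x|=\sum_i x_i$ and $M$ is any real number with $M > 4\max_y |f(y)|$. Since $g(b,x)$ uses only $f(x)$ together with the explicitly known weight $|x|$, each evaluation of $g$ costs at most one---and hence certainly at most two---queries to $f$. The design is tuned so that $\partial_0 g(x)=f(x)$ exactly: the new-direction discrete derivative is $f$ itself, so monotonicity of $f$ translates directly into one of the submodularity conditions for $g$. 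The remaining conditions will be underwritten by the background term $-M\binom{|x|}{2}$, whose marginal along any old direction $j$ equals $-M|x|$ and therefore drops by $M$ with every unit increase in $|x|$.

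For completeness (the first bullet), I would check $\partial_j g(z)\ge\partial_j g(z')$ for every $j$ and every $z\le z'$ with $z_j=z'_j=0$. Along $j=0$ this reads ``$f$ is monotone non-increasing''. Along $j\ne 0$, writing $z=(b,x)$ and $z'=(b',y)$, one has $\partial_j g(b,x)=b\,\partial_j f(x)-M|x|$ and three cases arise. The within-$b=0$ case is immediate. The across-layers case ($b=0$, $b'=1$, $x\le y$) is tightest at $x=y$, where it reduces to $\partial_j f(y)\le 0$, i.e.\ monotonicity of $f$. The within-$b=1$ case becomes $\partial_j f(y)-\partial_j f(x)\le M(|y|-|x|)$, which holds because $|y|-|x|\ge 1$ while the left-hand side is bounded by $4\|f\|_\infty<M$.

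For soundness (the second bullet), I would argue contrapositively: given a submodular $g'$ with $d(g,g')=\delta$, set $f'(x):=g'(1,x)-g'(0,x)=\partial_0 g'(x)$. Submodularity of $g'$ forces $\partial_0 g'$, viewed as a function on $\{0,1\}^n$, to be monotone non-increasing, so $f'$ is monotone non-increasing. Since $f(x)=\partial_0 g(x)$, one has $f(x)\ne f'(x)$ only when $g$ and $g'$ disagree at $(0,x)$ or $(1,x)$, and a union bound over the two slices yields
$$
d(f,f')\;\le\;\Pr_x[g(0,x)\ne g'(0,x)]+\Pr_x[g(1,x)\ne g'(1,x)]\;=\;2\delta.
$$
Consequently, if $f$ is $\epsilon$-far from monotone non-increasing, then $\delta\ge\epsilon/2$, which is exactly the second bullet.

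The main obstacle is the within-$b=1$ case of the submodularity check: if one naively took $g(b,x)=b\,f(x)$, any local non-submodularity of $f$ would immediately become a violated square on the top slice of $g$, killing completeness. The role of the correction $-M\binom{|x|}{2}$ is precisely to supply a strictly submodular scaffold whose per-step marginal drop of $M$ dominates the maximum possible change in $\partial_j f$; because the correction depends on $x$ alone, it leaves $\partial_0 g=f$ untouched and thereby preserves the intended monotonicity-to-submodularity correspondence.
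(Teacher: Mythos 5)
Your construction has the same overall shape as the paper's proof: one auxiliary coordinate, arranged so that $\partial_0 g(x)=f(x)$ (hence monotonicity of $f$ becomes exactly the submodularity condition in the new direction), plus a strictly submodular ``scaffold'' depending only on $\|x\|_1$ whose per-square slack absorbs any non-submodularity of $f$ on the top slice; your soundness step via $f'(x)=g'(1,x)-g'(0,x)$ and the factor-$2$ union bound over the two slices is essentially identical to the paper's, and your case analysis for completeness is correct as far as it goes. The genuine gap is the third bullet. You require $M>4\max_y|f(y)|$, and $\max_y|f(y)|$ is a global quantity: it cannot be computed, or even certified, from two values of $f$, so an oracle for your $g$ cannot be simulated by $O(1)$ queries to $f$ (your remark that ``each evaluation of $g$ costs at most one query'' silently assumes $M$ is known). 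This bullet is not a cosmetic add-on; it is the entire point of the lemma, since Corollary~\ref{cor-lb} converts a $q$-query submodularity tester for $g$ into an $O(q)$-query monotonicity tester for $f$, and that conversion is impossible if fixing the definition of $g$ already requires knowledge of $\|f\|_\infty$ (i.e., exponentially many queries, or an unstated promise on the range of $f$).

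The paper avoids this by tying the scaffold's strength to a single function value: it sets $h(x)=f(\emptyset)\,\|x\|_1(n-\|x\|_1)$, $g(0,x)=h(x)$, $g(1,x)=f(x)+h(x)$, so every evaluation of $g$ needs only $f(x)$ and $f(\emptyset)$, and the square slack $2f(\emptyset)$ of $h$ is argued to dominate the square deficits of a monotone non-increasing $f$ using $f(x)\le f(\emptyset)$. To repair your version while keeping your (cleaner) marginal-by-marginal verification, you must replace $M$ by a quantity computable from a constant number of $f$-values and then recheck the completeness inequalities with that choice; for instance, for monotone non-increasing $f$ one has $0\ge \partial_j f(x)\ge -(f(\emptyset)-f(\mathbf{1}))$, so any $M>f(\emptyset)-f(\mathbf{1})$ suffices and is determined by the two values $f(\emptyset)$ and $f(\mathbf{1})$. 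As written, though, the query-efficiency claim of the lemma is unsupported by your construction.
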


\begin{proof} We will use small letters $x,y$, etc. to denote points
in $\{0,1\}^n$. Points in $\{0,1\}^{n+1}$ will be denoted by $(0,x)$ or $(1,x)$,
where the first bit denotes the absence or presence of the new element.
We use $\be_*$ to denote the unit vector corresponding to the new element,
and $\be_i,\be_j$ to denote the other unit vectors.
For convenience, monotone will mean monotonically non-increasing.
Define $h(x) = f(\emptyset)\|x\|_1(n-\|x\|_1)$. We define $g$ by the following:
$g(0,x) = h(x)$, and $g(1,x) = f(x) + h(x)$. So any value of $g$
can be computed by looking at $2$ values of $f$.

We first show that $h$ is submodular. Consider $x$ and $i,j$ such
that $x_i = x_j = 0$. Let $\|x\|_1 = r$ and $f(\emptyset) = M$. 
\begin{eqnarray*}
& & h(x+\be_i) + h(x+\be_j) - h(x+\be_i+\be_j) - h(x) \\
& = & M[2(r+1)(n-r-1) - r(n-r) - (r+2)(n-r-2)] \\
& = & M[(2nr-2r^2-2r+2n-2r-2) - nr+r^2 - nr+r^2+2r - 2n+2r+4] \\
& = & M[(2nr-2r^2+2n-4r-2) - (2nr-2r^2+2n-4r-4)] = 2M
\end{eqnarray*}
Hence $h$ is submodular. 

Assume that $f$ is monotone.
Then, for any $x$, $f(x) \leq f(\emptyset) = M$
Since $f(x+\be_i) + f(x+\be_j) - f(x+\be_i+\be_j) - f(x) \leq 2M$,
$f+h$ is also submodular.

Suppose $g$ is not submodular. Then there exists a violated square in $g$. Suppose this square does
not involve $\be_*$. This square is contained in a copy of $\{0,1\}^n$ where the function
is equal to $h$ or $f+h$. But this would imply that either $h$
or $f+h$ is non-submodular. So, this square must involve $\be_*$.
Then we have the following:
$$	0 < g(0,x) + g(1,x+\be_i) - g(0,x+\be_i) - g(1,x) =
f(x+\be_i) - f(x).  $$
This violates the non-increasing property of $f$. Hence, we conclude
that $g$ is submodular.

Now, suppose that $f$ is $\epsilon$-far from being monotone. 
Furthermore, suppose we can modify $\epsilon 2^n$ values
of $g$ to get a submodular function $g'$. Consider the function
$f'(x) = g'(1,x) - g'(x)$. Since $g'$ is submodular, $f'$ must
be monotone. Since $g'$ differs from $g$ in at most $\epsilon 2^n$
values, the monotone function $f'$ differs from $f$ in at most $\epsilon 2^n$
values. This is a contradiction. So, $g$ must
be $\epsilon/2$-far from being submodular.
\qed
\end{proof}

By the results in~\cite{FLNRR02}, there is an $\Omega(\sqrt{n})$ non-adaptive
and $\Omega(\log {n})$ lower bound for $1$-sided monotonicity testers.
We get the following corollary.

\begin{corollary} \label{cor-lb} Any non-adaptive $1$-sided tester for submodularity requires $\Omega(\sqrt{n})$ queries.
Any adaptive $1$-sided tester requires $\Omega(\log n)$ queries.
\end{corollary}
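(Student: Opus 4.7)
The plan is a direct reduction argument using Lemma~\ref{lem-red}. Suppose, for contradiction, that there is a non-adaptive $1$-sided tester $T$ for submodularity on $\{0,1\}^{n+1}$ that uses $q$ queries with distance parameter $\epsilon/2$. I would use $T$ as a black box to build a non-adaptive $1$-sided tester $T'$ for monotonicity on $\{0,1\}^n$ with parameter $\epsilon$, as follows: given oracle access to $f:\{0,1\}^n \to \RR$, construct the function $g:\{0,1\}^{n+1} \to \RR$ from Lemma~\ref{lem-red} implicitly, simulate $T$ on $g$, and whenever $T$ queries a value $g(z)$, answer it by making the $2$ corresponding queries to $f$. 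Output whatever $T$ outputs.

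Next I would verify the three required properties of $T'$. Correctness on monotone inputs: if $f$ is (non-increasing) monotone, Lemma~\ref{lem-red} says $g$ is submodular, so the $1$-sided tester $T$ accepts with probability $1$, and hence $T'$ accepts with probability $1$. Correctness on $\epsilon$-far inputs: if $f$ is $\epsilon$-far from monotone, Lemma~\ref{lem-red} says $g$ is $\epsilon/2$-far from submodular, so $T$ rejects with probability at least $2/3$. Preservation of non-adaptivity: since each query to $g$ is answered by two \emph{fixed} queries to $f$ (independent of any function values), the full set of $2q$ queries made by $T'$ is determined before any oracle answer is seen. The total query complexity of $T'$ is thus $2q$.

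Now I would invoke the lower bound of \cite{FLNRR02}: any non-adaptive $1$-sided monotonicity tester on $\{0,1\}^n$ requires $\Omega(\sqrt{n})$ queries. Hence $2q = \Omega(\sqrt{n})$, which gives $q = \Omega(\sqrt{n})$ for the submodularity tester on $\{0,1\}^{n+1}$; shifting the dimension by one is absorbed into the asymptotic notation. The adaptive case is identical, except the reduction preserves adaptivity trivially (the simulation of each query to $g$ uses only the knowledge of the point to be queried in $g$, and can be inserted into any adaptive query strategy), and we invoke the $\Omega(\log n)$ adaptive $1$-sided monotonicity lower bound from \cite{FLNRR02} instead.

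There is no real obstacle: the only things to check are that the reduction preserves $1$-sidedness (it does, because monotone $f$ produces submodular $g$ exactly, not just approximately) and that it preserves non-adaptivity (it does, because the map from $g$-queries to pairs of $f$-queries is hard-coded in the definition of $g$). The factor-of-$2$ blowup in queries and the factor-of-$2$ loss in the distance parameter are both harmless for $\Omega(\sqrt{n})$ and $\Omega(\log n)$ asymptotic bounds.
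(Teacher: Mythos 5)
Your proposal is correct and is exactly the argument the paper intends: the corollary follows by black-box simulation of a submodularity tester on the function $g$ from Lemma~\ref{lem-red}, answering each $g$-query with two $f$-queries, which preserves one-sidedness, non-adaptivity (resp.\ adaptivity), and the distance parameter up to a factor of $2$, and then invoking the $\Omega(\sqrt{n})$ non-adaptive and $\Omega(\log n)$ adaptive monotonicity lower bounds of \cite{FLNRR02}. The paper leaves precisely these routine verifications implicit, so there is nothing to add.
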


\section{Future work}
\label{sec:future}

All of this work is centered on the following very general question: what really
makes a function submodular? Of course, it is ``just" monotonicity
of marginal values, but this does not capture the full structure
of submodular functions. We want to understand how different
sets of values in a submodular function interact and influence each other.
The problem of property testing submodularity appears
to be a very appealing way of studying this question.
Our constructions show that functions
far from being submodular could have marginal values that are much closer
to being monotone. 

The problem of completing
partial functions comes up when we try to understand how to convert a non-submodular
function into a submodular one (a major component of a property testing proof).
Again, our constructions yield insight into how seemingly unconnected 
parts of a submodular function must be related.

The authors believe there is a lot of scope for further research directions. There are many interesting
questions to be answered, and we have barely seen the tip of the iceberg.
We state some questions here.

1. \emph{Relation between violated squares and distance to submodularity:} For a function
$f$ $\epsilon$-far from being submodular, what is the minimum (as a function of $\epsilon$ and $n$)
density of violated squares it can have? Can we prove that this minimum density is at least $\poly(\epsilon/n)$?

2. \emph{Efficient testers for submodularity:} Does there exist a tester for submodularity
with running time $\poly(n/\epsilon)$ or maybe $\poly(n)$ for constant $\epsilon$? Perhaps we
can find an efficient \emph{adaptive} tester, or a tester that searches for obstructions
other than violated squares.

3. \emph{Testing rank functions:} A matroid gives rise to a \emph{rank function}, which
is always submodular. A function is a rank function iff it is a submodular function
with marginal values $0$ or $1$. Can we test whether an input function $f$ is a rank function?
Note that even though these are a special case of submodular functions, it is not
clear that this is easier (or harder). This is because the \emph{distance
to a rank function} might be significantly different from the distance to submodularity.

4. \emph{Testing matroid independence oracles:} Any matroid can be represented as a collection
of independent sets. Suppose we have a function that tells us whether a set
is independent (for some purported matroid). Can we efficiently test whether
this function is indeed a valid independence oracle? This seems like a rather fundamental
question about matroids.

\paragraph{Acknowledgement.}

We thank Deeparnab Chakrabarty for very useful discussions. Indeed, the main question
whether submodularity is testable came up during discussions with him.

\bibliographystyle{alpha}
\bibliography{submodular_testing}

\end{document}